\numberwithin{equation}{section}
\newtheorem{theorem}{Theorem}[section]
\newtheorem{proposition}[theorem]{Proposition}
\newtheorem{corollary}[theorem]{Corollary}
\theoremstyle{definition}
\theoremstyle{remark}
\def\utilde#1{\mathord{\vtop{\ialign{##\crcr
$\hfil\displaystyle{#1}\hfil$\crcr\noalign{\kern1.5pt\nointerlineskip}
$\hfil\widetilde{}\hfil$\crcr\noalign{\kern1.5pt}}}}}
\newcommand{\R}{\mathbb{R}}
\newcommand{\cE}{\mathcal{E}}
\newcommand{\cJ}{\mathcal{J}}
\newcommand{\cL}{\mathcal{L}}
\newcommand{\cN}{\mathcal{N}}
\newcommand{\cP}{\mathcal{P}}
\newcommand{\cT}{\mathcal{T}}
\newcommand{\cW}{\mathcal{W}}
\newcommand{\so}{\mathfrak{so}}
\newcommand{\gt}{\tilde{g}}
\newcommand{\et}{\tilde{e}}
\newcommand{\Mt}{\widetilde{M}}
\newcommand\RedeclareMathOperator{%
  \@ifstar{\def\rmo@s{m}\rmo@redeclare}{\def\rmo@s{o}\rmo@redeclare}%
}
\newcommand\rmo@redeclare[2]{%
  \begingroup \escapechar\m@ne\xdef\@gtempa{{\string#1}}\endgroup
  \expandafter\@ifundefined\@gtempa
     {\@latex@error{\noexpand#1undefined}\@ehc}%
     \relax
  \expandafter\rmo@declmathop\rmo@s{#1}{#2}}
\newcommand\rmo@declmathop[3]{%
  \DeclareRobustCommand{#2}{\qopname\newmcodes@#1{#3}}%
}
\RedeclareMathOperator{\Re}{Re}
\RedeclareMathOperator{\Im}{Im}
\DeclareMathOperator{\End}{End}
\DeclareMathOperator{\ISO}{ISO}
\DeclareMathOperator{\GL}{GL}
\DeclareMathOperator{\SO}{SO}
\DeclareMathOperator{\CO}{CO}
\RedeclareMathOperator{\O}{O}
\DeclareMathOperator{\Ric}{Ric}
\DeclareMathOperator{\Weyl}{Weyl}
\DeclareMathOperator{\Riem}{Riem}
\DeclareFontFamily{U}{MnSymbolC}{}
\DeclareSymbolFont{MnSyC}{U}{MnSymbolC}{m}{n}
\DeclareFontShape{U}{MnSymbolC}{m}{n}{
    <-6>  MnSymbolC5
   <6-7>  MnSymbolC6
   <7-8>  MnSymbolC7
   <8-9>  MnSymbolC8
   <9-10> MnSymbolC9
  <10-12> MnSymbolC10
  <12->   MnSymbolC12}{}
\DeclareMathSymbol{\intprod}{\mathbin}{MnSyC}{'270}
\newcommand*{\boxwedge}{%
  \mathbin{%
    \mathpalette\@boxwedge{}%
  }%
}
\newcommand*{\@boxwedge}[2]{%
  \sbox0{$#1\boxplus\m@th$}%
  \dimen2=.5\dimexpr\wd0-\ht0-\dp0\relax 
  \dimen@=\dimexpr\ht0+\dp0\relax
  \def\lw{.06}
  \kern\dimen2 
  \tikz[
    line width=\lw\dimen@,
    line join=round,
    x=\dimen@,
    y=\dimen@,
  ]
  \draw
    (\lw/2,0) rectangle (1-\lw,1-\lw)
    (\lw,0) -- (.5,1-\lw-\lw/2) -- (1-\lw-\lw/2 ,0)
  ;%
  \kern\dimen2 
}
\newcommand*{\owedge}{%
  \mathbin{%
    \mathpalette\@owedge{}%
  }%
}
\newcommand*{\@owedge}[2]{%
  \sbox0{$#1\oplus\m@th$}%
  \dimen2=.5\dimexpr\wd0-\ht0-\dp0\relax 
  \dimen@=\dimexpr\ht0+\dp0\relax
  \def\lw{.04}
  \def\radius{.5-\lw/2}%
  \kern\dimen2 
  \tikz[
    line width=\lw\dimen@,
    line join=round,
    x=\dimen@,
    y=\dimen@,
    baseline=\dimexpr-.5\dimen@+\dp0\relax,
  ]
  \draw
    (0,0) circle[radius=\radius]
    (225:\radius) -- (0,.5-\lw) -- (-45:\radius)
  ;%
  \kern\dimen2 
}
\let\save@mathaccent\mathaccent
\newcommand*\if@single[3]{%
  \setbox0\hbox{${\mathaccent"0362{#1}}^H$}%
  \setbox2\hbox{${\mathaccent"0362{\kern0pt#1}}^H$}%
  \ifdim\ht0=\ht2 #3\else #2\fi
  }
\newcommand*\rel@kern[1]{\kern#1\dimexpr\macc@kerna}
\newcommand*\widebar[1]{\@ifnextchar^{{\wide@bar{#1}{0}}}{\wide@bar{#1}{1}}}
\newcommand*\wide@bar[2]{\if@single{#1}{\wide@bar@{#1}{#2}{1}}{\wide@bar@{#1}{#2}{2}}}
\newcommand*\wide@bar@[3]{%
  \begingroup
  \def\mathaccent##1##2{%
    \let\mathaccent\save@mathaccent
    \if#32 \let\macc@nucleus\first@char \fi
    \setbox\z@\hbox{$\macc@style{\macc@nucleus}_{}$}%
    \setbox\tw@\hbox{$\macc@style{\macc@nucleus}{}_{}$}%
    \dimen@\wd\tw@
    \advance\dimen@-\wd\z@
    \divide\dimen@ 3
    \@tempdima\wd\tw@
    \advance\@tempdima-\scriptspace
    \divide\@tempdima 10
    \advance\dimen@-\@tempdima
    \ifdim\dimen@>\z@ \dimen@0pt\fi
    \rel@kern{0.6}\kern-\dimen@
    \if#31
      \overline{\rel@kern{-0.6}\kern\dimen@\macc@nucleus\rel@kern{0.4}\kern\dimen@}%
      \advance\dimen@0.4\dimexpr\macc@kerna
      \let\final@kern#2%
      \ifdim\dimen@<\z@ \let\final@kern1\fi
      \if\final@kern1 \kern-\dimen@\fi
    \else
      \overline{\rel@kern{-0.6}\kern\dimen@#1}%
    \fi
  }%
  \macc@depth\@ne
  \let\math@bgroup\@empty \let\math@egroup\macc@set@skewchar
  \mathsurround\z@ \frozen@everymath{\mathgroup\macc@group\relax}%
  \macc@set@skewchar\relax
  \let\mathaccentV\macc@nested@a
  \if#31
    \macc@nested@a\relax111{#1}%
  \else
    \def\gobble@till@marker##1\endmarker{}%
    \futurelet\first@char\gobble@till@marker#1\endmarker
    \ifcat\noexpand\first@char A\else
      \def\first@char{}%
    \fi
    \macc@nested@a\relax111{\first@char}%
  \fi
  \endgroup
}
\title[Einstein gravity as a gauge theory for the conformal group]{Einstein gravity as a gauge theory for the conformal group
}
\author{ Yannick Herfray}
\address[Y.~Herfray]{Département de Mathématique\\
	Université Libre de Bruxelles}
\email{Yannick.Herfray@ulb.ac.be}
\author{Carlos Scarinci}
\address[C.~Scarinci]{Institute of Mathematical Sciences \\ Ewha Womans University
}
\email{cscarinci@ewha.ac.kr}
\begin{document}

\begin{abstract}
General Relativity in dimension $n = p + q$ can be formulated as a gauge theory for the conformal group $\SO\left(p+1,q+1\right)$, along with an additional field reducing the structure group down to the Poincaré group $\ISO\left(p,q\right)$. In this paper, we propose a new variational principle for Einstein geometry which realizes this fact. Importantly, as opposed to previous treatments in the literature, our action functional gives first order field equations and does not require supplementary constraints on gauge fields, such as torsion-freeness.

Our approach is based on the ``first order formulation'' of conformal tractor geometry. Accordingly, it provides a straightforward variational derivation of the tractor version of the Einstein equation. To achieve this, we review the standard theory of tractor geometry with a gauge theory perspective, defining the tractor bundle a priori in terms of an abstract principal bundle and providing an identification with the standard conformal tractor bundle via a dynamical soldering form. This can also be seen as a generalization of the so called Cartan-Palatini formulation of General Relativity in which the ``internal'' orthogonal group $\SO\left(p,q\right)$ is extended to an appropriate parabolic subgroup $P\subset\SO\left(p+1,q+1\right)$ of the conformal group.
\end{abstract}
\maketitle

\section{Introduction}

\subsection{Conformally compact manifolds and the Cartan-Palatini functional}\mbox{}

The variational approach to General Relativity can be given a first order formulation via the Cartan-Palatini action functional, see e.g. \cite{krasnov_formulations_2020},
\begin{equation}\label{Introduction Cartan-Palatini}
	S_{CP}\left[\tilde\omega,\et\right] = \int_{\Mt} \left(\tfrac{1}{2(n-2)!} F^{\tilde\omega}{}^{ij} - \tfrac{1}{n!}\;\Lambda\;\et^i\wedge \et^j \right)\wedge \epsilon_{ijk_1\cdots k_{n-2}}\et^{k_{1}}\wedge \cdots\wedge \et^{k_{n-2}}.
\end{equation}
Here, the soldering form $\et^i \in \Gamma(T^* \Mt \otimes \cE)$ is a bundle isomorphism from the tangent space $T\Mt$ to a so-called ``internal'' tangent bundle $\cE\to\Mt$, equipped with a canonical fiber metric $\eta_{ij}$ of signature $(p,q)$, while the spin-connection $\nabla^{\tilde\omega}$ is an independent metric-compatible connection on $\cE$.

A pseudo-Riemannian metric $\gt := \et^i \et^i \eta_{ij}$ is then defined on the $n$-dimensional space-time $\Mt$ via pull-back of the fiber metric $\eta_{ij}$, and the field equation obtained by varying the connection --- $d^{\tilde\omega} \et^i=0$ --- requires the pull-back of $\nabla^{\tilde\omega}$ to coincide with the Levi-Civita connection $\nabla^{\gt}$. Evaluating \eqref{Introduction Cartan-Palatini} on the solution of this field equation, one immediately recovers the usual Einstein-Hilbert action for $\gt$. In particular, the field equation obtained by varying soldering form gives a condition on the curvature of $\nabla^{\tilde\omega}$, which is equivalent to the Einstein equations
\begin{equation*}
	\Ric^{\gt}-\frac{1}{2}R^{\gt} \gt+\Lambda \gt=0.
\end{equation*}

The story is more subtle for manifolds with boundary. The variational principle must be complemented by appropriate boundary conditions and the functional \eqref{Introduction Cartan-Palatini} by additional boundary terms, ensuring well-definiteness of the functional or at least of its variations. An important example, particularly motivating for the present work, is that of \emph{conformally compact} boundary conditions \cite{penrose_conformal_2011,frauendiener_conformal_2004}. Starting with a compact manifold $M$ with boundary $\partial M$ and a ``physical'' soldering form $\et^i$ on its interior $\Mt$, one introduces a decomposition $\et^i=\sigma^{-1} e^i$ in terms of an auxiliary ``unphysical'' soldering $e^i$ and a scalar field $\sigma$ on $M$. This results in two metrics $\gt = \et^i \et^j \eta_{ij}$ and $g= e^i e^j \eta_{ij}$, which are related by conformal rescaling 
\begin{equation*}
\gt = \sigma^{-2} g.
\end{equation*}
The fields $\left(e^i, \sigma\right)$ are then subject to simple boundary conditions : $\sigma$ is required to identically vanish at $\partial M$, with nowhere vanishing derivative, while $e^i$ is required to extend the soldering from $\sigma\et^i$ with a prescribed level of differentiability at $\partial M$, see \cite{friedrich_peeling_2018} for a discussion on the physical relevance of this last criteria.

Since these important boundary conditions are imposed on the unphysical fields, one is lead to consider variational principles for Einstein metrics written explicitly in terms of $\left(\nabla^\omega , e^i , \sigma \right)$. In what follows, we consider the \emph{Cartan-Palatini-Weyl functional}
\begin{align}\label{Introduction Cartan-Palatini-Weyl}
S_{CPW}\left[\omega,e,\sigma\right] = \int_M \sigma^{2-n}\bigg(\tfrac{1}{2}F^\omega{}^{ij}
+d^{\omega}(\Upsilon^i e^j)-\left( \tfrac{n-2}{2n}|\Upsilon|^2
+\sigma^{-2}\tfrac{(n-2)!}{n!} \Lambda\right) e^i\wedge e^j\bigg)\wedge(\star  e^{n-2})_{ij},
\end{align}
where we use $\Upsilon_i e^i := \sigma^{-1}d\sigma$ and 
\begin{equation*}
\left(\star e^{n-k} \right)_{i_1\cdots i_k} :=\frac{1}{(n-k)!} \;\epsilon_{i_1\cdots i_k j_{1} \cdots j_{n-k}}\; e^{j_{1}}\wedge \cdots\wedge e^{j_{n-k}}.
\end{equation*} 

As compared to \eqref{Introduction Cartan-Palatini}, the functional \eqref{Introduction Cartan-Palatini-Weyl} depends on one additional scalar field $\sigma$. This is however compensated by an additional invariance under Weyl rescalings: for every positive scalar function $\lambda\in C^\infty(M,\R_+)$, \eqref{Introduction Cartan-Palatini-Weyl} is invariant under the replacements
\begin{align}\label{Introduction Cartan-Palatini-Weyl Invariance}
\omega{}^i{}_j &\mapsto \omega^i{}_j +\lambda\left(e^ir_j -r^ie_j\right), & e^i& \mapsto \lambda e^i, &\sigma & \mapsto \lambda \sigma,
\end{align}
with $r_ie^i:= \lambda^{-2}d\lambda$. In particular, away from $\partial M$, we can take $\sigma=1$ as a gauge-fixing condition and recover the original Cartan-Palatini functional.

\begin{DetailedCheckForTransformationRules}	
	
	\XDnote{One can take the following as a compatibility check:
	\begin{align*}
	d^{\hat\omega}\hat e^i&=d\hat e^i+\hat\omega^i{}_j\wedge \hat{e}^j
	\cr&
	=d(\lambda e^i)+ \lambda (\omega^i{}_j+\lambda\left(e^i r_j-r^ie_j\right))\wedge e^j
	\cr&
	=\lambda d^\omega e^i+\lambda^2\Big(\lambda^{-2}d\lambda\wedge e^i+e^i r_j\wedge e^j\Big)
	\cr&
	=\lambda d^\omega e^i+\lambda^2\Big(\lambda^{-2}d\lambda-r_j e^j\Big)\wedge e^i
	\end{align*}
	In other words, if
	\begin{align*}
	\hat\omega^i{}_j=\omega^i{}_j+\lambda\left(e^i r_j-r^ie_j\right),
	\qquad
	r_j e^j= \lambda^{-2}d\lambda.
	\end{align*}	
	
	then $d^{\hat\omega}\hat e^i=  \lambda d^{\omega} e^i$.
}
\end{DetailedCheckForTransformationRules}

More directly, we can easily see that \eqref{Introduction Cartan-Palatini-Weyl} is indeed a functional for Einstein metrics by considering its variation with respect to the connection $\omega$, solving the resulting field equations $d^\omega e^i=0$, and then evaluating the functional on the corresponding solution. This results in the \emph{Einstein-Hilbert-Weyl functional}
\begin{equation*}
	S_{EHW}[ g,\sigma] := S_{EH}[ \sigma^{-2} g ]
	=\int_M\sigma^{2-n}\Big( \tfrac{1}{2}R^g  + (n-1)(\nabla^g\cdot\Upsilon
	-\tfrac{n-2}{2}|\Upsilon|^2) - \sigma^{-2}\Lambda \Big)dv_g,
\end{equation*}
where now $\Upsilon := \sigma^{-1}d\sigma$, or equivalently, introducing $\phi := \sigma^{\frac{2-n}{2}}$ in the more familiar functional for a conformally coupled scalar field:
\begin{equation}\label{Introduction Einstein-Hilbert-Weyl}
	S[ g,\phi] := S_{EH}[ \phi^{\frac{4}{n-2}} g ]
	=\int_M  \Big(\tfrac{1}{2}\phi^2R^g   -2\tfrac{n-1}{n-2}\phi \Delta^g\phi -  \phi^{\frac{2n}{n-2}}\Lambda \Big)dv_g.
\end{equation}
Once again, the classical Einstein-Hilbert functional can be recovered through a partial gauge fixing away form $\partial M$. 
\begin{comment1}
\XDnote{
	For any $a\in\R^\times$
	$$\nabla^g\cdot\Upsilon-\tfrac{(n-2)}{2}|\Upsilon|^2=\sigma^{-1}\Delta^g\sigma-\tfrac{n}{2}|\Upsilon|^2=\tfrac{1}{a}\sigma^{-a}\Delta^g\sigma^a-(\tfrac{n-2}{2}+a)|\Upsilon|^2=-\tfrac{2}{n-2}\sigma^{-\frac{2-n}{2}}\Delta^g\sigma^{\frac{2-n}{2}}$$
}
\end{comment1}

The functional \eqref{Introduction Einstein-Hilbert-Weyl} goes back at least to \cite{deser_scale_1970,zumino_effective_1971} and has been studied extensively in the physics literature, especially generalizations obtained by coupling this conformal scalar field to the usual Einstein-Hilbert action (see \cite{bekenstein_exact_1974,halliwell_conformal_1989,xanthopoulos_einstein_1992,abreu_exact_1994} and references therein), see also \cite{padmanabhan_conformal_1985} for other types of modifications of this Lagrangian preserving invariance under Weyl rescalings. Super-symmetric extensions have also been investigated in \cite{kaku_poincare-supergravity_1978} and Part 5 of \cite{freedman_supergravity_2012}.

Despite being a crucial feature of \eqref{Introduction Cartan-Palatini-Weyl} and \eqref{Introduction Einstein-Hilbert-Weyl}, the invariance under Weyl rescaling is not manifest in either functional. And, indeed, it is a generic and unpleasant feature of (the naive approach to) conformal geometry that Weyl rescaling is rather difficult to implement. In the study of conformally compact manifolds this leads to a dilemma well-known to practitioners: one either works with the ``physical'' field $\gt = \sigma^{-2} g$, having then a direct geometric interpretation for the field equations but with little control over the boundary behavior, or one works with ``unphysical'' fields $g$ and $\sigma$, with a better geometric interpretation for the boundary conditions but with more complicated field equations and having also to deal with cumbersome Weyl transformations.

Such dilemma can be resolved using more refined tools from conformal geometry. Namely, a reformulation of the conformally compact boundary conditions is available in terms of \emph{conformal tractor calculus} \cite{gover_almost_2010,curry_introduction_2018}. With this, all fields and their field equations 1) have a well-defined behavior over the whole manifold $M$, including the boundary $\partial M$, 2) have natural geometric interpretations and, most importantly, 3) are manifestly invariant under Weyl transformations. The resulting picture is conceptually compelling and was successfully used in \cite{gover_conformal_2007,gover_boundary_2014,gover_poincare-einstein_2015} to produce new (both local and global) differential operators on the boundary of Poincaré--Einstein manifolds and in \cite{herfray_asymptotic_2020,herfray_tractor_2021} to invariantly encode the geometry of gravitational radiation at null-infinity. We refer the reader to \cite{bailey_thomass_1994,curry_introduction_2018} for a comprehensive introduction to the tractor formalism in conformal geometry.

Ultimately, the functional \eqref{Introduction Cartan-Palatini-Weyl} raises two main difficulties which are fundamentally related: on the one hand Weyl invariance is not manifest, on the other hand the geometrical meaning of the field equations is obscure. In the present article we will make use of tractor calculus to construct \emph{tractor functionals} for General Relativity which, at the same time, are equivalent to \eqref{Introduction Cartan-Palatini-Weyl}, are manifestly Weyl invariant and have first order field equations with a direct geometric interpretations (in the context of conformal geometry).

It should be noted here that tractor functionals have already made appearances in the literature \cite{gover_tractors_2009,gover_weyl_2009,bonezzi_gravity_2010}. The first two reference were essentially concerned with field theories on a fixed conformal background, although one can also find in these references a tractor functional for Einstein metrics. The third reference discuss this functional and its relationship to six dimensions.  We would like to point out, however, that in these works the tractor fields are not fundamental but parametrized in terms of other variables (the scale tractor is a function of the conformal metric and a scale). As such, these tractor fields cannot be varied independently, and the result is no more than a rewriting of the Einstein-Hilbert-Weyl functional \eqref{Introduction Einstein-Hilbert-Weyl} in tractor language. As a consequence, the resulting field equations in \cite{gover_tractors_2009,gover_weyl_2009,bonezzi_gravity_2010} are of second order.

The presence of implicit constraints and the high order character of the field equations seems to be, to the best of our knowledge, a common feature of all previous approaches to gravity based on the conformal group.  In fact, the functionals presented in \cite{ferrara_unified_1977,kaku_gauge_1977,kaku_superconformal_1977,kaku_properties_1978,kaku_poincare-supergravity_1978,townsend_simplifications_1979} or \cite{merkulov_supertwistor_1985} --- none of which apply to General Relativity, but rather to Conformal Gravity --- and those in \cite{wheeler_weyl_2014,attard_weyl_2016,wheeler_general_2019} --- which are functionals for General Relativity --- all share this unpleasant trait. Typically, the torsion-free condition on the connection, even thought necessary to obtain General Relativity, does not follow from the field equations but is added in by hand. In this sense, these constructions do not achieve genuine functionals for gauge connections valued in the conformal group. It is also worth mentioning \cite{preitschopf_conformal_1999} where tractor like Lagrangians where constructed in terms a $(n+2)$-dimensional ambient manifold. The problems discussed above are then avoided at the price of working on higher dimensional manifolds.

\subsection{A lightening overview of the first order tractor functional for General Relativity}\mbox{}

Let us now give a brief description of our tractor functionals. In following subsection we will provide a more mathematical account of its relations to the geometry of the tractor bundle.

Here the discussion will follow a more physical spirit, along the lines of \cite{ferrara_unified_1977,kaku_gauge_1977,kaku_superconformal_1977,kaku_properties_1978,kaku_poincare-supergravity_1978}. Accordingly, we take the view that a \emph{tractor} is simply a field with values in the fundamental representation of $\SO(p+1,q+1)$, denoted by
\begin{align*}
	I^I = \begin{pmatrix} \sigma \\ \mu^i \\ \rho \end{pmatrix}\in\R^{p+1,q+1},
	& &
	\sigma\in \R,\quad\mu^i\in \R^{p,q},\quad \rho \in \R.
\end{align*}
The ``internal vector indices'' $i,j,\cdots$ ranging between $1$ and $n = p+q$ and ``tractor indices'' $I,J,\cdots$ ranging between $1$ and $n+2$.

In the Cartan-Palatini formulation of General Relativity, the internal flat metric $\eta_{ij}$ is provided as a background structure, used to raise and lower internal vector indices, thus restricting the local symmetry group of the theory to the Lorentz group $\SO(p,q)$. In comparison, the tractor formulation also comes equipped with a flat tractor metric $h_{IJ}$
\begin{align*}
	h_{IJ} = \begin{pmatrix}  0 & 0 & 1 \\ 0 & \eta_{ij} & 0 \\ 1 & 0 & 0 \end{pmatrix},
	& &
	|I|^2=I_I I^I=2\sigma \rho+|\mu|^2,
\end{align*}
but, additionally, with a preferred null direction generated by the so called position tractor $X^I$
\begin{align*}
	X^I = \begin{pmatrix} 0 \\ 0 \\ 1 \end{pmatrix},
	& &
	|X|^2=X_I X^I=0.
\end{align*}
This restricts the local symmetry group from $\SO(p+1,q+1)$ to a parabolic subgroup $P \subset \SO(p+1,q+1)$ stabilizing the line generated by $X^I$ --- i.e. $p^I{}_J$ is in $P$ if and only if $p^I{}_J X^J$ is proportional to $X^I$. 

This should be compared with the so-called MacDowell-Mansouri formulation \cite{macdowell_unified_1977} of 4-dimensional General Relativity, in which a preferred direction in $\R^{p+1,q}$ is singled out, thus reducing the local symmetry group of the theory from $\SO(p+1,q)$ to $\SO(p,q)$ --- this aspect is specially clear in the treatment of \cite{stelle_sitter_1979}. We note that this is ultimately related to the theory of Cartan connections, see \cite{wise_macdowellmansouri_2010}, which in fact describes the underlying geometry of tractors \cite{thomas_conformal_1926}.

The elements in the parabolic group $P$ can be parametrized as
\begin{align}\label{Introduction Tractor Functional P-Invariance}
	p^I{}_J=\begin{pmatrix}1 & 0 & 0 \\ r^i & \delta^i{}_k & 0 \\ -\frac{1}{2}|r|^2 & -r_k & 1 \end{pmatrix}\begin{pmatrix}\lambda & 0 & 0 \\ 0 & m^k{}_j & 0 \\ 0 & 0 & \lambda^{-1} \end{pmatrix},
	& &
	p^I{}_J X^J=\lambda^{-1}X^I,
\end{align}
where $\lambda\in\R_+$, $m^i{}_j\in\SO(p,q)$ and $r^i\in\R^n$. Note that these correspond to the subgroups of Weyl rescalings, Lorentz transformations and special conformal transformations of $\R^{p,q}$.

The dynamical fields in our tractor formulation are given by a \emph{tractor connection} $D$, here simply thought of as a $\mathfrak{so}\left(p+1,q+1\right)$-valued connection,
\begin{equation*}
D = d+ A^I{}_J=\left(\begin{matrix}\nabla^\tau & -e_j & 0 \\ -\xi^i & \nabla^\omega & e^i \\ 0 & \xi_j & \nabla^{-\tau} \end{matrix}\right),
\end{equation*}
together with an \emph{infinity tractor} $I^I$, taken here for convenience to be a generic tractor with constant norm $|I|^2 = -\tilde{\Lambda}$
\begin{align*}
	I^I =  \begin{pmatrix} \sigma \\ \mu^i\\ -\frac{|\mu|^2+\tilde{\Lambda}}{2\sigma}\end{pmatrix}.
\end{align*}
Note, the component gauge fields $e^i$, $\tau$, $\omega^i{}_j$ and $\xi_j$ can be interpreted  as associated to the generators of translations $P^i$, dilations $D$, Lorentz transformation $J^i{}_j$ and special conformal transformations $K_j$ (see \cite{kaku_gauge_1977,freedman_supergravity_2012} for more on gauge theory for the (super)-conformal group).

With all these definitions at hand, we can immediately write the tractor action as
\begin{equation}\label{Introduction Tractor Functional}
S\left[ A, I \right] = \int_M \;   \sigma^{1-n}\; X^I \; I^J\bigg(  \tfrac{1}{2}F^{KL} + \sigma^{-1}\;DI^K\wedge E^L + \tfrac{1}{n} \sigma^{-2} \; \tilde{\Lambda}\;E^K \wedge E^L \bigg)\wedge(\star E^{n-2})_{IJKL},
\end{equation}
where we also introduce $F^I{}_J:=dA^I{}_J+A^I{}_K\wedge A^K{}_J$, $E^I:=DX^I$ and
\begin{equation*}
\left(\star E^{n-2} \right)_{IJKL} := \frac{1}{(n-2)!} \;\epsilon_{IJKL K_1 ... K_{n-2}}\; E^{K_1}\wedge \cdots\wedge E^{K_{n-2}}.
\end{equation*}

This functional enjoys an extended version of Weyl invariance, now in a manifest form, given by transformations of the form
\begin{align}\label{Introduction P-Gauge Transformations}
I^I \mapsto p^I{}_J I^J,
& &
A^{I}{}_{J}\mapsto p^I{}_K \; A^K{}_L \; (p^{-1})^L{}_J - dp^I{}_K \; (p^{-1})^K{}_J,
\end{align}
where $p^I{}_J$ is given as in \eqref{Introduction Tractor Functional P-Invariance}. This follows almost immediate from the from of the action, given by contractions of tractor fields, except for fact that the background position tractor $X^I$ is not invariant under \eqref{Introduction Tractor Functional P-Invariance}. In particular, this implies that $E^I := DX^I$ cannot transform covariantly. Fortunately, however, the action turns out to also be invariant under shifts $E^I \mapsto E^I + z X^I$, and therefore it is completely insensitive to this problem.

The critical points of this functional have a simple interpretation in terms of tractor geometry (see more on this below). In this subsection however we would like to concentrate on its equivalence with \eqref{Introduction Cartan-Palatini-Weyl}. Opening up the tractor fields in terms of their components in \eqref{Introduction Tractor Functional} one obtains the following functional
\begin{equation}\label{Introduction Tractor Functional Lorentz Invariant Form}
S\left[\omega,e,\sigma,\mu\right] = \int_M \sigma^{2-n} \bigg( \tfrac{1}{2} F_{\omega}{}^{ij} + d_{\omega}\left( \sigma^{-1}\mu^i e^j \right) 
-\sigma^{-2}\tfrac{n-2}{2n}\left(|\mu|^2 + \tilde{\Lambda}\right) e^i\wedge e^j \bigg)\wedge ( \star e^{n-2})_{ij}.
\end{equation}

This looks already very close to the Cartan-Palatini-Weyl functional \eqref{Introduction Cartan-Palatini-Weyl}. In fact, up to the condition $\sigma^{-1}\mu^i = \Upsilon^i=(\sigma^{-1}d\sigma)^i$, and also $ \tilde{\Lambda}  = \tfrac{2\Lambda}{(n-1)(n-2)}$), the two functionals are identical. In comparison to \eqref{Introduction Cartan-Palatini-Weyl}, the functional \eqref{Introduction Tractor Functional Lorentz Invariant Form} contains a single additional field $\mu^i$ --- the components $\tau$ and $\xi_j$ of the tractor connection do not contribute to the functional --- as well as an additional gauge invariance under special conformal transformations. This allows us to achieve $\sigma^{-1}\mu^i = \Upsilon^i$ as a gauge fixing condition, rendering the two functionals gauge equivalent.

More explicitly, note that the tractor functional \eqref{Introduction Tractor Functional Lorentz Invariant Form} is fully invariant under $P$-gauge transformations, which in components takes the form
\begin{align*}
\omega^i{}_j \mapsto \omega^i{}_j+\lambda(e^i r_j - r^ie_j),
& &
e^i \mapsto \lambda e^j,
& &
\sigma \mapsto \lambda\sigma,
& &
\mu^i\mapsto \mu^i+\lambda \sigma r^i.
\end{align*}
Here we restrict to $m^i{}_j=\delta^i{}_j$ for convenience. It thus follows that we can always achieve $\sigma^{-1}\mu^i = \Upsilon^i$ by an appropriate choice of $r^i$. Note, finally, that we also recover the gauge symmetries \eqref{Introduction Cartan-Palatini-Weyl Invariance} as the subgroup of \eqref{Introduction P-Gauge Transformations} which preserves the condition $\sigma^{-1}\mu^i =\Upsilon^i$ --- obtained by setting $r^i = (\lambda^{-2} d\lambda)^i$ in \eqref{Introduction Tractor Functional P-Invariance}.

More invariantly, also note that imposing the gauge fixing condition
\begin{align*}
I^I=\begin{pmatrix}
1 \cr 0 \cr -\tfrac{\tilde \Lambda}{2}
\end{pmatrix},
\end{align*}
immediately reduces the functional \eqref{Introduction Tractor Functional} to the original Cartan-Palatini functional \eqref{Introduction Cartan-Palatini} --- this amounts to setting at the same time $\sigma^{-1}\mu^i =\Upsilon^i$ and $\sigma=1$. Moreover the subgroup of \eqref{Introduction P-Gauge Transformations} preserving this condition is exactly $\SO(p,q)$, the local symmetry group of \eqref{Introduction Cartan-Palatini}.

\begin{DetailedCheckForTransformationRules}	
	\XDnote{
The full $P$-gauge transformation can be written in components as
\begin{gather}
\begin{align*}
\omega^i{}_j\mapsto m^i{}_k\omega^k{}_l(m^{-1})^l{}_j-dm^i{}_k(m^{-1})^k{}_j + \lambda\left(m^i{}_ke^k r_j - r^ie_k(m^{-1})^k{}_j \right),
\end{align*}
\\
\begin{align*}
e^i \mapsto \lambda m^i{}_je^j,
& &
\sigma \mapsto \lambda \sigma,
& &
\mu^i\mapsto m^i{}_j\mu^i+\lambda \sigma r^i ,
\end{align*}
\end{gather}
}

	\XDnote{
		We here show that the gauge fixing condition $\sigma^{-1}\mu^i =\Upsilon^i$ is preserved if and only if $r_i = (\lambda^{-1} d\lambda)_i$.
		
		On the one hand,
		\begin{align*}
		\hat{\Upsilon}_i :&= \hat{e}^{-1}_i \left(\hat{\sigma}^{-1} d\hat{\sigma}\right) = \lambda^{-1} \; e^{-1}_i \left( \sigma^{-1} d \sigma + \lambda^{-1} d\lambda \right) \\
		&= \lambda^{-1} \left( \Upsilon_i + (\lambda^{-1} d\lambda)_i \right).
		\end{align*}
		On the other hand,
		\begin{align*}
		\hat{\sigma}^{-1} \hat{\mu}_i :& =  \lambda^{-1} \; \sigma^{-1} \left( \mu^i + \lambda \sigma r^i \right) \\
		&= \lambda^{-1} \left(  \sigma^{-1} \mu^i + \lambda r^i \right).
		\end{align*}
		Therefore $\Upsilon_i = \sigma^{-1} \mu_i$ implies $\hat{\Upsilon}_i = \hat{\sigma}^{-1} \hat{\mu}_i$ if and only if $r_i = (\lambda^{-2} d\lambda)_i$.
	}
\end{DetailedCheckForTransformationRules} 

\subsection{First order tractor functionals for General Relativity}
\mbox{}

We now offer a more geometric description of our approach to tractor calculus and of the main results achieved in this paper.

Tractor calculus can be understood as a generalization of the usual Ricci calculus on pseudo-Riemannian manifolds to the context of conformal geometry, see e.g. \cite{thomas_conformal_1926,curry_introduction_2018}. While a pseudo-Riemannian metric defines a unique metric preserving torsion-free connection on the tangent bundle, its Levi-Civita connection, a conformal metric defines a unique \emph{normal tractor connection} on a canonical conformally invariant \emph{standard tractor bundle} \cite{thomas_conformal_1926}. Here, by tractor bundle we mean a rank $(n+2)$ vector bundle $\cT^S\to M$, endowed fiber metric $h$ of signature $(p+1,q+1)$ together with the so-called \emph{position tractor} $X\in\Gamma(LM\otimes\cT^S)$, a preferred nowhere vanishing null section twisted by the (standard) scale bundle $LM=|\bigwedge^nT^*M|^{-1/n}$ of $M$. A normal tractor connection is then simply a metric connection $D^S:\Gamma(\cT^S)\to\Gamma(T^*M\otimes\cT^S)$ on $\cT^S$ satisfying certain curvature conditions which generalize the usual torsion-free condition.

The existence of a canonical tractor bundle with a unique normal connection is certainly important in the study of conformal geometry, in particular for the construction of conformal invariants \cite{gover_conformally_2003}. But it also provides remarkably efficient tools for the study of conformally compact Einstein manifolds \cite{gover_almost_2010,curry_introduction_2018}. In fact, both the Einstein equation and the conformally compact boundary condition find their simplest form in the language of tractors: Given a conformal manifold, with its standard tractor bundle, representative metrics $\gt$ in the conformal class can be put in one-to-one correspondence with a special class of tractor sections $I^{\gt}\in\Gamma(\cT^S)$. The metric $\gt$ is then conformally compact if and only if at $\partial M$ the position tractor $X$ is both orthogonal and transversal to $I^{\gt}$; the metric $\gt$ is Einstein if and only if $I^{\gt}$ is covariantly constant with respect to the normal tractor connection,
\begin{align*}
D^S I^{\gt}=0.
\end{align*}
This reformulation of Einstein equations is also closely related to Friedrich's conformal field equations \cite{frauendiener_local_1999}, which were important in the derivation of global existence results in General Relativity \cite{friedrich_cauchy_1983,friedrich_peeling_2018,friedrich_geometric_2015} and also found applications in numerical relativity \cite{frauendiener_conformal_2004}.

In the present article we wish to describe how a \emph{first order approach} to tractor geometry can be used in the construction of action functionals for General Relativity, providing manifestly Weyl invariant variational principles for Einstein metrics. These can be immediately applied to conformally compact manifolds and naturally extend the Cartan-Palatini-Weyl variational principle \eqref{Introduction Cartan-Palatini-Weyl}.

Our chosen approach to tractor Einstein geometry follows similar lines to the Cartan-Palatini formulation of General Relativity. Accordingly, we start by fixing an abstract $(n+2)$-rank vector bundle $\cT\to M$, endowed with a fiber metric $h_{IJ}$ of signature $(p+1,q+1)$ and a preferred position tractor $\bm X^I\in\Gamma(\cT\otimes\cL)$, and consider choices of 1) a generalized conformal frame $\bm E^I\in\Gamma(T^*M\otimes\cL\otimes\cT)$, 2) a general tractor connection $D:\Gamma(\cT)\to\Gamma(T^*M\otimes\cT)$ and 3) a non-degenerate tractor section $I^I\in\Gamma(\cT)$. Note that, here, the position tractor is twisted by an abstract line bundle $\cL\to M$ associated to $\cT$.

Together, the generalized conformal frame $\bm E^I$ and the tractor section $I^I$ define a pseudo-Riemannian metric on $M$ as the pull-back of the tractor metric $h_{IJ}$, appropriately rescaled by powers of $\bm \sigma=\bm X_I I^I$,
\begin{align*}
g=\bm\sigma^{-2}\bm E_I\bm E^I.
\end{align*}
In particular, this provides an identification between the abstract tractor bundle $\cT$ and the standard tractor bundle $\cT^S$ for the conformal class defined by $g$.

In this language, the conformally compact boundary conditions for $g$ are simply given by
\begin{align*}
\bm X_I I^I|_{\partial M}=0,
& &
|I|^2|_{\partial M}\neq 0,
\end{align*}
while the Einstein equation for the composite metric $g$ can then be written as a set of first order differential equations on $D$, $\bm E^I$ and $I^I$
\begin{align}\label{Introduction First Order Tractor Einstein Equation}
F^I{}_J\bm X^J=0,
& &
\Ric(F)=0,
& &
\bm\sigma^{-1}\bm E^I=D(\bm\sigma^{-1}\bm X^I),
& &
DI^I=0.
\end{align}
Here, the first three equations simply identify $D$ as the pull-back of the standard normal tractor connection on $\cT^S$, while the final equation is the tractor version of the Einstein equation described above.

The main result of our present paper is the description of a variational principle for obtaining such first order tractor Einstein equations. Up to boundary terms, this can be written simply as
\begin{align}\label{Introduction Tractor Functional Full}
S[D,I,\bm E]=\int_M\bm L^{IJKL}\wedge(\star \bm E^{n-2})_{IJKL},
\end{align}
with
\begin{align*}
\bm L^{IJKL}=\bm\sigma^{1-n}\bm X^I\Big[I^J\Big(\tfrac{1}{2}F^{KL}+\alpha\bm\sigma^{-2}\bm E^K\wedge\bm E^L\Big) + \bm Y^JD\Big(I^K-|I|^2\bm\sigma^{-1}\bm X^K\Big)\wedge\bm E^L\Big].
\end{align*}
Here, $\alpha$ is a polynomial function of the tractor norm $|I|^2$, whose specific form is not too important at the moment, and $\bm Y^I$ is the unique tractor satisfying
\begin{align*}
\bm X_I\bm Y^I=1,
& &
\bm E_I\bm Y^I=0,
& &
\bm Y_I\bm Y^I=0.
\end{align*}

Our first result describes the critical points of the tractor action \eqref{Introduction Tractor Functional Full} in relation to Einstein geometry.
\begin{theorem}
A triple $(D,\bm E^I,I^I)$ is a critical point of \eqref{Introduction Tractor Functional Full} if and only if it satisfies the first order tractor Einstein equations \eqref{Introduction First Order Tractor Einstein Equation}.
\end{theorem}

Our second result describes the relation between the Cartan-Palatini-Weyl functional \eqref{Introduction Cartan-Palatini-Weyl} and the tractor functional \eqref{Introduction Tractor Functional Full}, as described in the previous subsection.
\begin{theorem}
The action functionals \eqref{Introduction Cartan-Palatini-Weyl}, \eqref{Introduction Tractor Functional} and \eqref{Introduction Tractor Functional Full} are gauge equivalent: they define the same moduli space of critical points and, moreover, their values coincide when evaluated on the moduli space.
\end{theorem}

It should be emphasized that in our approach the abstract tractor bundle $\cT$, with its tractor metric $h_{IJ}$ and its position tractor $\bm X^I$, is taken to be fixed. On the other hand, the tractor connection $D$, the generalized conformal frame $\bm E^I$ and the tractor section $I^I$ are varied as part of a variational principle. Consequently, $(\cT,h_{IJ},\bm X^I)$ are not defined in terms of a conformal geometry as is usual in the standard literature on parabolic geometry, but rather it is given \emph{a priori}, as a $P$-vector bundle satisfying favorable topological conditions. Upon imposing the equations of motion, we can identify the tractor bundle $\cT$ with the standard tractor bundle for the conformal metric determined by the (now normal) tractor connection $D$. In other words, the tractor bundle is here seen as a background on which to study conformal geometry, rather than a structure defined by the conformal geometry itself.

In the context of variational principles, this approach presents the important advantage of allowing to consider variations of conformal frames (conformal metrics) without the need of varying the associated tractor bundle. The price to pay for this is the introduction of additional gauge freedom: the isomorphism between the abstract and the standard tractor bundle is determined only up to automorphisms of $\cT$. Consequently, while the normal tractor connection is unique as a connection on the standard tractor bundle, it defines a family of gauge-equivalent connections on the abstract tractor bundle. We will come back on the relationship between the abstract and standard tractor bundle in subsection \ref{Subsec: Scale tractor and the Einstein equation}.

\section{A review of Tractor geometry}

In this section we provide a review of Tractor geometry in a form which is adapted to  variational calculus as needed in the next section. This presentation will in fact closely follow the Cartan-Palatini formulation of General Relativity. We hope that this will benefit physicists willing to learn the subject.

\subsection{Conformal frame bundles}\mbox{}
\subsubsection{Structure group and representations}
Let $\R^{p,q}$ denote the vector space $\R^n = \R^{p+q}$ with a non-degenerate symmetric bilinear form of signature $(p,q)$. Here we use the abstract index notation and write vectors in $\R^{p,q}$ as $V^i$, its bilinear form as $\eta_{ij}$, and their contraction as $V_i=\eta_{ij}V^j\in(\R^{p,q})^*$.

Let $W = \CO(p,q)\subset \GL^+(n)$ denote the \emph{pseudo-Riemannian linear conformal group} which preserves the bilinear form $\eta_{ij}$ up to positive scaling,
\begin{equation*}
W=\Big\{A^i{}_j=\lambda m^i{}_j\in\GL^+(n) \mid \lambda\in\R_+,\; m^i{}_j\in\SO(p,q)\Big\}\subset\GL^+(n),
\end{equation*}
and consider $\cW\to M$ a principal $W$-bundle over an $n$-dimensional manifold $M$.

For every $k\in\R$, denote by $\cL^k\to M$ the associated line bundle defined by the representation
\begin{align*}
W\times\R\to\R,
& &
(\lambda m^i{}_j,\bm s)\mapsto\lambda^k \bm s,
\end{align*}
and denote by $\cE\to M$ the associated vector bundle defined by the representation
\begin{align}\label{eq:weight-zero_tangent_bundle}
	W\times\R^n\to\R^n,
	& &
	(\lambda m^i{}_j,V^i)\mapsto m^i{}_jV^j.
\end{align}

We assume here that the line bundles $\cL^k$ are all trivial and that the vector bundles $\cL^k\otimes\cE$ are isomorphic to $TM$. We will then call $\cL^k$ the \emph{(internal) weight $k$ scale bundle} over $M$ and $\cL^k\otimes\cE$ the \emph{(internal) weight $k$ tangent bundle}.

In order to visually keep track of the \emph{weights}, we will use here an extended version of the abstract index notation where sections with non-trivial weights are represented by boldface letters. For example, we will write
\begin{align*}
\bm\sigma\in\Gamma(\cL),
& &
\mu^i\in\Gamma(\cE),
& &
\bm\rho\in\Gamma(\cL^{-1}),
& &
\bm g\in\Gamma(S^2 T^*M\otimes\cL^2).
\end{align*}

\subsubsection{Conformal frames and Weyl connections}\label{Subsubsec: Conformal frames}
A \emph{conformal frame} on $M$ is defined to be a soldering form on $\cL\otimes\cE$, that is, a bundle isomorphism $\bm e^i:TM\to\cL\otimes\cE$ between the tangent bundle of $M$ and the internal weight 1 tangent bundle.

Note that although $\cE$ is associated to the principal bundle $\cW$, its structure group is reduced to $\SO(p,q)$, see equation \eqref{eq:weight-zero_tangent_bundle}. In particular, $\cE$ is equipped with a well-defined non-degenerate fiber metric of signature $(p,q)$, which with a slight abuse of notation we will also denote by $\eta_{ij}\in\Gamma(S^2\cE^*)$.

Given a conformal frame $\bm e^i$, we can define an $\cL^2$-valued metric on $TM$ via the pull-back
\begin{equation*}
\bm{g}=\eta_{ij}\bm e^i\bm e^j\in\Gamma(S^2T^*M\otimes\cL^2).
\end{equation*}
We will call this $\cL^2$-valued metric $\bm g=\bm{g}[\bm e]$ the \emph{conformal metric} associated to $\bm e^i$. This is justified since, once $\bm g$ is given, there exists a one-to-one correspondence between pseudo-Riemannian metrics $g$ proportional to $\bm g$ and non-degenerate (positive) scales $\bm\sigma\in\Gamma(\cL)$ via
\begin{equation*}
g=\bm\sigma^{-2}\bm g\in\Gamma(S^2T^*M).
\end{equation*}
In other words, $\bm g$ defines a conformal class of metrics on $M$ in the usual sense.

Note that, since $\bm g$ is non-degenerate, it also defines a nowhere vanishing $\cL^n$-valued volume form $\bm{\nu}\in \Gamma(\bigwedge^nT^*M\otimes\cL^n)$. In particular, the bundle $\bigwedge^nT^*M \otimes \cL^{n}$ must be a trivial line bundle. Moreover, since $M$ is orientable and since $\cL$ is assumed trivial, we must also have $\bm{\nu} = \bm{s}^{-n}$ for some non-degenerate section $\bm{s} \in \Gamma(LM\otimes\cL^*)$. The conformal metric $\bm g$ thus also defines an isomorphism between the internal scale bundle $\cL$ to the \emph{standard scale bundle} $LM:=|\bigwedge^nT^*M|^{-1/n}$.

A principal $W$-connection $(\tau,\omega^i{}_j)\in\Omega^1(\cW,\so(p,q)\oplus\R)$ on $\cW$ induces a \emph{Weyl connection} $\nabla^\tau$ on $\cL$ and a \emph{spin connection} $\nabla^\omega$ on $\cE$, preserving the canonical fiber metric $\eta_{ij}$. The Weyl connection $\nabla^\tau$ on $\cL$ extends uniquely to connections on $\cL^k$, for every $k\in\R$, and, together with the spin connection $\nabla^\omega$, to all the tensor products of $\cL^k$ and $\cE$. Given a conformal frame $\bm e^i$, we say that $(\tau,\omega^i{}_j)$ is \emph{torsion-free} with respect to $\bm e^i$ if
\begin{align*}
d^{\tau,\omega}\bm e^i=d\bm e^i+\tau\wedge\bm e^i+\omega^i{}_j\wedge\bm e^j=0.
\end{align*}
This condition can be used to determine the spin connection $\nabla^\omega$ as a function of the conformal frame $\bm e^i$ and of a choice of Weyl connection $\nabla^\tau$. In particular, from the point of view of the principal bundle $\cW$, there is no canonical torsion-free connection associated to a given conformal frame. This should be compared with the uniqueness of the Levi-Civita connection of a pseudo-Riemannian metric.

\subsection{Tractor bundles}\label{Subsec: Tractor bundles}
\subsubsection{Structure group and representations}\label{Subsubsec: Tractor Structure group and representations}

Now consider $G=\SO(p+1,q+1)$ the full \emph{pseudo-Riemannian conformal group} and let $P\subset G$ be the parabolic subgroup stabilizing an oriented null line $N\subset\R^{p+1,q+1}$. Since the subspace $N\subset \R^{p+1,q+1}$ is null, it is contained in its own orthogonal complement $N^\perp\subset \R^{p+1,q+1}$. This subspace is also stabilized by the action of $P$, and thus we obtain a canonical $P$-invariant filtration
\begin{equation*}
N\subset N^\perp\subset \R^{p+1,q+1},
\end{equation*}
and also canonical $P$-invariant quotients $L=\R^{p+1,q+1}/N^\perp$ and $E=N^\perp/N$. The vector space $E$ is then endowed with a $P$-invariant bilinear form of signature $(p,q)$ and the line $L$ is endowed with a $P$-invariant pairing with the subspace $N$, both induced by the bilinear form of $\R^{p+1,q+1}$. In particular, the induced actions of $P$ on $E$, $L$ and $N$ reduce to actions of $W=\R_+\times\SO(p,q)$, respectively isomorphic to the actions on $\R^n$ and $\R$ defined in the previous subsection, and the action on $\R^*$ obtained via duality.

To be more explicit, making an appropriate choice of basis, we can write 
\begin{align*}
V^I=\begin{pmatrix} \bm V^+ \cr V^i \cr \bm V^- \end{pmatrix}\in \R^{p+1,q+1},
&&
h_{IJ}=\begin{pmatrix}0 & 0 & 1 \cr 0 & \eta_{ij} & 0 \cr 1 & 0 & 0 \end{pmatrix},
& &
|V|^2=V_IV^I=2\bm V^+\bm V^-+V_iV^i
\end{align*}
where $\eta_{ij}$ denotes a signature $(p,q)$ bilinear form on $\R^{p,q}$.

Further, we can choose the preferred null line $N\subset \R^{p+1,q+1}$ to be generated by the vector
\begin{align*}
X^I=\begin{pmatrix} 0 \cr 0 \cr 1 \end{pmatrix}\in\R^{p+1,q+1},
& &
N=\Big\{\begin{pmatrix} 0 \cr 0 \cr \bm V^- \end{pmatrix}\mid \bm V^-\in\R\Big\},
& &
N^\perp=\Big\{\begin{pmatrix} 0 \cr V^i \cr \bm V^- \end{pmatrix}\mid \bm V^-\in\R,\, V^i\in \R^{p,q}\Big\},
\end{align*}
and we find the following parametrization of the quotient spaces $L$ and $E$
\begin{align*}
L=\Big\{\begin{pmatrix} \bm V^+ \cr * \cr * \end{pmatrix}\mid \bm V^+\in\R\Big\},
& &
E=\Big\{\begin{pmatrix} 0 \cr V^i \cr * \end{pmatrix}\mid V^i\in\R^{p,q}\Big\}.
\end{align*}
The parabolic subgroup $P\subset G$ can then also be decomposed as $P = (\R^n)^*\ltimes W$, with elements parametrized uniquely as 
\begin{equation*}
p^I{}_J=\begin{pmatrix}1 & 0 & 0 \cr r^i & \delta^i{}_k & 0 \cr -\frac{1}{2}|r|^2 & -r_k & 1 \end{pmatrix}\begin{pmatrix}\lambda & 0 & 0 \cr 0 & m^k{}_j & 0 \cr 0 & 0 & \lambda^{-1} \end{pmatrix}\in P,
\end{equation*}
for $\lambda\in\R_+$, $m^i{}_j\in\SO(p,q)$ and $r_i\in(\R^n)^*$.

Now, given $\cP\to M$ a principal $P$-bundle over $M$, we define the \emph{(internal) tractor bundle} $\cT\to M$ as the rank $n+2$ vector bundle associated to the fundamental representation of $P$ on $\R^{p+1,q+1}$
\begin{align*}
P\times\R^{p+1,q+1}\to\R^{p+1,q+1},
& &
(p^I{}_J,V^I)\mapsto p^I{}_JV^J.
\end{align*}
This is endowed with a canonical metric of signature $(p+1,q+1)$, the \emph{tractor metric}, denoted here by $h_{IJ}\in\Gamma(S^2\cT^*)$, and with a canonical filtration $\cN\subset\cN^\perp\subset\cT$, where the subbundles $\cN$ and $\cN^\perp$ are defined with respect to the induced representations of $P$ on $N$ and $N^\perp$, respectively.

We also define quotient bundles $\cL=\cT/\cN$ and $\cE=\cN^\perp/\cN$, respectively associated to the induced representations of $P$ on $L$ and $E$
\begin{align*}
P\times L\to L,
& &
(p^I{}_J,\bm V^+)\mapsto \lambda \bm V^+,
\end{align*}
and
\begin{align*}
	P\times E\to E,
	& &
	(p^I{}_J,V^i)\mapsto m^i{}_j V^j.
\end{align*}

Note that the canonical metric on $E$ and the canonical pairing between $L$ and $N$ give rise, respectively, to a fiber metric of signature $(p,q)$, again denoted by $\eta_{ij}\in\Gamma(S^2\cE^*)$, and to a nowhere vanishing section $\bm X^I\in\Gamma(\cN\otimes\cL)$. Here, $\bm X^I$ can also be interpreted as a canonical isomorphism $\bm X^I:\cL^{-1}\to\cN$ between the line bundle $\cL^{-1}=\cL^*$ and the null subbundle $\cN\subset\cT$. The non-degenerate section $\bm X^I$ is often referred to as the \emph{position tractor}.

Note also that, although the vector bundles $\cL$ and $\cE$ are associated to the principal $P$-bundle $\cP$, they effectively they correspond to representations of the quotient group $W = P/(\R^n)^*$. While the projection $P\to W$ is canonically defined --- so we can alway define a principal $W$-bundle $\cW=\cP/(\R^n)^*$ --- reductions of the structure group from $P$ to $W$ are non-unique and correspond to the introduction of additional structures on $\cP$.

\subsubsection{Tractor frames and tractor decompositions}\label{Subsubsec: Tractor frames and tractor decompositions}

By construction, the internal tractor bundle $\cT$ is isomorphic to a direct sum $\cL\oplus\cE\oplus\cL^{-1}$, although in a non-canonical way. Different choices of such isomorphisms will be called \emph{tractor decompositions} of $\cT$. They are the structures which correspond to reductions of the structure group of $\cP$ to $W$.

Consider a bundle map $\bm Y^I:\cL\to\cT$ embedding the line bundle $\cL$ into $\cT$ as a null subbundle \emph{complementary} to $\cN$. More precisely, let $\bm Y^I\in\Gamma(\cT\otimes\cL^{-1})$ be a nowhere vanishing section satisfying
\begin{align*}
\bm X_I\bm Y^I=1,
& &
\bm Y_I\bm Y^I=0.
\end{align*}
Such an embedding of $\cL$ into $\cT$ uniquely determines a corresponding embedding of $\cE$ into $\cT$ as an orthogonal subbundle to both $\cN$ and $\cL$, and which projects by the identity onto $\cE=\cN^\perp/\cN$. In other words, given the section $\bm Y^I$, there exists a unique section $Z^{I}{}_j\in\Gamma(\cT\otimes\cE^*)$ satisfying
\begin{align*}
\bm X_I Z^{I}{}_j=0,
& &
\bm Y_I Z^{I}{}_j=0,
& &
Z^i{}_j=\delta^i{}_j,
\end{align*}
where $Z^i{}_j:\cE\to\cE$ is the composition of $Z^{I}{}_j:\cE\to\cN^\perp$ with the canonical projection $\cN^\perp\to\cE$, and $\delta^i{}_j:\cE\to\cE$ denotes the identity map on $\cE$.

The triple $(\bm X^I,\bm Y^I, Z^{I}{}_j)$ then defines a tractor decomposition $\cT\to\cL\oplus\cE\oplus \cL^{-1}$ via
\begin{align*}
I^I\mapsto\begin{pmatrix} \bm\sigma \cr \mu^i \cr \bm \rho \end{pmatrix},
& & 
\bm \sigma:=I^I\bm X_I\in\Gamma(\cL),
\quad
\mu^j :=I^I Z_I{}^j\in\Gamma(\cE),
\quad
\bm \rho :=I^I\bm Y_I\in\Gamma(\cL^{-1}),
\end{align*}
for every section $I^I\in\Gamma(\cT)$. The inverse isomorphism can be explicitly written as
\begin{equation*}
\begin{pmatrix} \bm\sigma \cr \mu^i \cr \bm\rho \end{pmatrix}\mapsto I^I:=\bm X^I \bm\sigma + Z^{I}{}_j\mu^j +\bm Y^I \bm\rho.
\end{equation*}
We will call the section $\bm Y^I$, and with some abuse of notation also the corresponding triple $(\bm X^I,\bm Y^I, Z^{I}{}_j)$ as above, a \emph{tractor frame} on $\cT$. Note that, by construction, these also give a decomposition of the tractor metric via
\begin{equation*}
	h_{IJ}= \bm X_I\bm Y_J+\bm Y_I\bm X_J+Z_I{}_i Z_J{}^i.
\end{equation*}

The space of tractor frames on $\cT$ is an affine space modeled on $\Gamma(\cL^{-1}\otimes\cE^*)$. In fact, given any pair $(\bm Y^I, \hat{\bm Y}^I)$ of tractor frames, we obtain the following relations
\begin{align}\label{tractors_frame_transformations}
\hat{\bm Y}^I=\bm Y^I-Z^{I}{}_j\bm z^{j}-\bm X^I\frac{1}{2}|\bm z|^2,
& &
\hat Z^{I}{}_j= Z^{I}{}_j+\bm X^I\bm z_{j},
& &
\bm z_{j}=\bm Y_I \hat Z^{I}{}_j\in\Gamma(\cL^{-1}\otimes\cE^*).
\end{align}
In particular, we can easily compute \emph{transformations rules} for the components of a tractor section $I^I$ induced by a change of tractor frames
\begin{equation}\label{eq:tractor_transformation_rules}
\begin{pmatrix} \hat{\bm\sigma} \cr \hat\mu^i \cr \hat{\bm\rho} \end{pmatrix}= 
\begin{pmatrix}
	 1 & 0 & 0\\
	 \bm z^{i} & \delta^i{}_j & 0 \\
	 -\frac{1}{2}|\bm z|^2 & -\bm z_{j} & 1
 \end{pmatrix}
 \begin{pmatrix} \bm\sigma \cr \mu^j  \cr \bm\rho \end{pmatrix}.
\end{equation}

Here we see that the first component $\bm\sigma$ of a tractor $I^I$ is completely independent of the choice of tractor frame. In fact, we have a well-defined projection $\bm X_I : \cT\to\cL$. We will call a tractor section $I^I\in\Gamma(\cT)$ \emph{non-degenerate} if it projects to a non-degenerate scale $\bm \sigma\in\Gamma(\cL)$.

Now, given a conformal frame $\bm e^i\in\Gamma(T^*M\otimes\cL\otimes\cE)$, together with a tractor frame $\bm Y^I\in\Gamma(\cL^{-1}\otimes\cT)$, we can define an embedding of the tangent bundle of $M$, twisted by $\cL^{-1}$, into the tractor bundle $\cT$ via
\begin{equation*}
\bm E^I=Z^{I}{}_j\bm e^j\in\Gamma(T^*M\otimes\cL\otimes\cN^\perp).
\end{equation*}
This can then be used to pullback the tractor metric to a conformal metric on $M$ via
\begin{equation*}
\bm g=h_{IJ}\bm E^I\bm E^J=\eta_{ij}\bm e^i\bm e^j.
\end{equation*}
We will thus call $\bm E^I$ a \emph{generalized conformal frame} on $\cT$.

A generalized conformal frame $\bm E^I$ completely encodes the underlying conformal and tractor frames. In fact, in terms of a reference tractor decomposition, this follows easily from the decomposition
\begin{equation*}
\bm E^I = \begin{pmatrix} 0 \cr \bm e^i \cr -\bm z_{j}\bm e^j \end{pmatrix},
\end{equation*}
where $\bm e^i\in\Gamma(T^*M\otimes \cL\otimes\cE)$ is the underlying conformal frame and $\bm z_{j}\in\Gamma(\cL^{-1}\otimes\cE^*)$ parametrizes the underlying tractor frame. More invariantly, $\bm e^i$ can be defined as the projection of $\bm{E}^I$ into $\Gamma(T^*M\otimes\cL\otimes\cE)$, and $\bm Y^I$ as the unique section in $\Gamma(\cL^{-1}\otimes\cT)$ satisfying
\begin{align*}
\bm Y^I\bm X_I=1,
& &
\bm Y^I\bm Y_I=0,
& &
\bm Y^I\bm E_I=0.
\end{align*}

\subsubsection{Tractor connections}\label{Subsubsec: Tractor connections}

We now define a \emph{tractor connection} as a linear connection $D:\Gamma(\cT)\to\Gamma(T^*M\otimes\cT)$ on $\cT$ which preserves the tractor metric $h_{IJ}$.

Fixing a reference tractor decomposition $\cT\to\cL\oplus\cE\oplus\cL^{-1}$, we can write any tractor connection as
\begin{equation*}
D=\begin{pmatrix}\nabla^\tau & -\bm\theta_{j} & 0 \cr -\bm\xi^{i} & \nabla^\omega & \bm\theta^{i} \cr 0 & \bm\xi_{j} & \nabla^{-\tau} \end{pmatrix}
\end{equation*}
where $\nabla^\tau:\Gamma(\cL)\to\Gamma(T^*M\otimes\cL)$ and $\nabla^\omega:\Gamma(\cE)\to\Gamma(T^*M\otimes\cE)$ are Weyl and spin connections induced respectively on $\cL$ and $\cE$, and $\bm\theta^{i}:\cE^*\to TM^*\otimes\cL$ and $\bm\xi_{j}:\cE\to TM^*\otimes\cL^{-1}$ are 1-forms on $M$ with values respectively in $\cL\otimes\cE$ and $\cL^{-1}\otimes\cE^*$.

The transformation rules associated with changing the choice of tractor frame can be easily computed from the relations \eqref{tractors_frame_transformations} and \eqref{eq:tractor_transformation_rules}. We find
\begin{align}\label{Transformation rules for tractor connections1}
	\hat{\bm\theta}_i=\bm\theta_{i},
	& &
	\nabla^{\hat\tau}=\nabla^\tau+\bm\theta_j\bm z^j,
	& &
	\nabla^{\hat\omega}=\nabla^\omega+\bm\theta^{i}\bm z_{j}-\bm z^{i}\bm\theta_{j},
\end{align}
and
\begin{equation*}
	\hat{\bm\xi}_i=\bm\xi_i+\nabla^{\tau,\omega}\bm z_i-\Big(\bm z_i\bm z_j-\frac{1}{2}|\bm z|^2\eta_{ij}\Big)\bm\theta^j=\bm\xi_i-Q(-\bm z_i),
\end{equation*}
where $\bm z_i\in \Gamma(\cL^{-1}\otimes\cE^*)$ and $Q:\Gamma(\cL^{-1}\otimes\cE^*)\to \Gamma(T^*M\otimes\cL^{-1}\otimes\cE^*)$ is a first-order non-linear differential operator defined by
\begin{align}\label{Q operator Def}
Q(\bm r_i)&
=\nabla^{\tau,\omega}\bm r_i+\Big(\bm r_i\bm r_j-\frac{1}{2}|\bm r|^2\eta_{ij}\Big)\bm\theta^j.
\end{align}

\begin{DetailedCheckForTransformationRules}	
	
	\XDnote{These transformation rules can be checked as follows:
		\begin{align*}
		\bm \theta^i :&= Z^i_I D \bm X^I = \hat{Z}^i_I D \bm X^I = \hat{\bm \theta}^i\\ \\	
		\tau :&= -\bm Y_I D \bm X^I = -(\hat{\bm Y} + \bm z_j \hat{Z}^j   )_I D \bm X^I = \hat{\tau} - \bm z_{j} \bm \theta^j\\ \\
		\omega^i{}_j :&= Z^i_I D \bm Z^I_j =  (\hat{Z}^i_I - \bm z^i \bm X_I ) D  (\hat{Z}^I_j - \bm z_j X^I )\\ 
		 &= \hat{\omega}^i{}_j - \bm z^i \bm X_I D \hat{Z}^I_j -  \hat{Z}^i_I D (\bm X^I) \bm z_j  =  \hat{\omega}^i{}_j + \bm z^i \bm \theta_j -  \bm \theta^i \bm z_j\\ \\
		 \bm\xi_i :&= \bm Y_I D Z^I_i =  \bm Y_I D (  \hat{Z}_i - \bm z_i \bm X  )^I = - \nabla^{\tau} \bm z_ i  + \bm Y_I D \hat{Z}_i \\
		 &=  - \nabla^{\tau} \bm z_ i  + (\hat{\bm Y} + \bm z_j \hat{Z}^j- \tfrac{1}{2} |\bm z|^2 X )_ID \hat{Z}_i^I \\
		 &= - \nabla^{\tau} \bm z_ i +  \hat{\bm\xi}_i + \bm z_j \hat{\omega}^j{}_i + \tfrac{1}{2} |\bm z|^2  \bm \theta_i \\
		  &= - \nabla^{\tau,\omega} \bm z_ i +  \hat{\bm\xi}_i + \bm z_j (  -\bm z^j \bm \theta_i +  \bm \theta^j \bm z_i  )  + \tfrac{1}{2} |\bm z|^2  \bm \theta_i \\
		 &= - \nabla^{\tau,\omega} \bm z_ i +  \hat{\bm\xi}_i + \bm z_j z_i \bm \theta^j \bm   - \tfrac{1}{2} |\bm z|^2  \bm \theta_i \\
		 &= Q(-\bm z_i ) +  \hat{\bm\xi}_i
		\end{align*}

	}
\end{DetailedCheckForTransformationRules}

It is important to note from \eqref{Transformation rules for tractor connections1} that the component $\bm\theta^{i}\in\Gamma(T^*M\otimes\cL\otimes\cE)$ of a tractor connection is well-defined independently of the choice tractor frames. We will thus call a tractor connection \emph{non-degenerate} if the corresponding bundle map $\bm\theta^i:\cE^*\to TM^*\otimes\cL$ is a bundle isomorphism. In particular, every non-degenerate tractor connection gives rise to a conformal frame on $M$, and therefore to a corresponding conformal metric.

The \emph{curvature} of a tractor connection is defined as usual as the unique section $F^I{}_J\in\Gamma(\bigwedge^2T^*M\otimes\End(\cT))$ such that
\begin{equation*}
F^I{}_JI^J=d^D DI^I,
\end{equation*}
for every $I^I\in\Gamma(\cT)$. A non-degenerate tractor connection $D$ is called \emph{normal} if, with respect to any tractor frame, 1) $\bm{F}^I{}_+:=F^I{}_J\bm X^I$ is identically zero and 2) $F^i{}_j:=Z_I{}^iF^I{}_JZ^{J}{}_j$ is \emph{Ricci free}.

\begin{proposition}\label{Unicity of Normal Connection}
Let $\bm g$ be a given conformal metric on $M$. Then, up to $P$-gauge transformations,
\begin{itemize}[topsep=0pt,itemsep=10pt,leftmargin=20pt]
\item there exists a unique compatible normal tractor connection (for $n>2$);
\item compatible normal tractor connections form an affine space modeled on $\Gamma(S^2_0T^*M)$ (for $n=2$).
\end{itemize}
\end{proposition}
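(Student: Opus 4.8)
The plan is to fix a frame, turn the two normality conditions into pointwise equations on the blocks of the connection matrix, and isolate one linear-algebraic obstruction whose behaviour changes at $n=2$. First I would choose a conformal frame $\bm e^i$ representing $\bm g$ together with a reference tractor decomposition $\cT\to\cL\oplus\cE\oplus\cL^{-1}$ and a Weyl connection $\nabla^\tau$ on $\cL$. By the remarks of the Tractor connections subsection a compatible connection has non-degenerate soldering block; since this block is a conformal frame for $\bm g$ it equals $\bm e^i$ up to $\SO(p,q)$-gauge, so every compatible tractor connection is encoded by $(\nabla^\tau,\nabla^\omega,\bm\xi_i)$. The choices of $\bm e^i$ and $\nabla^\tau$ are to be treated as gauge and removed at the end through the transformation rules \eqref{Transformation rules for tractor connections1} and \eqref{Q operator Def}.

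Next I would impose the first condition $\bm F_D{}^I{}_+=F_D{}^I{}_J\bm X^J=0$ by computing the $\bm X$-column of $F_D=dA+A\wedge A$ from the explicit matrix, giving
\begin{align*}
F_D{}^+{}_-=-\bm\theta_j\wedge\bm\theta^j,
& &
F_D{}^i{}_-=d^{\tau,\omega}\bm\theta^i,
& &
F_D{}^-{}_-=-d\tau+\bm\xi_j\wedge\bm\theta^j.
\end{align*}
The first vanishes identically because $\bm\theta_j\wedge\bm\theta^j=\eta_{jk}\,\bm\theta^k\wedge\bm\theta^j=0$. The vanishing of the second is exactly the torsion-free condition of the Compatible Weyl structures subsection and thus determines $\nabla^\omega$ uniquely from $(\bm e^i,\nabla^\tau)$. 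Writing $\bm\xi_i=\bm\xi_{ij}\bm\theta^j$, only the skew part $\bm\xi_{[ij]}$ survives in the third, and since $\bm e^i$ identifies skew $\cE$-tensors with $2$-forms, its vanishing determines $\bm\xi_{[ij]}$ in terms of the Weyl curvature $d\tau$.

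The Ricci-free condition is the crux and the step I expect to be the main obstacle. Expanding the internal block yields
\begin{equation*}
F_D{}^i{}_j=R^\omega{}^i{}_j+\bm\xi^i\wedge\bm\theta_j+\bm\theta^i\wedge\bm\xi_j,
\end{equation*}
with $R^\omega$ the curvature of $\nabla^\omega$. Substituting $\bm\xi_i=\bm\xi_{ij}\bm\theta^j$ and extracting the Ricci component $\bm R_D{}_{ij}=\bm F_D{}^k{}_{(i|k|j)}$, the index computation should collect the $\bm\xi$-dependence into the linear operator
\begin{equation*}
L(\bm\xi)_{ij}=(n-2)\,\bm\xi_{(ij)}+\bm\xi_k{}^k\,\eta_{ij},
\end{equation*}
so that $\bm R_D{}_{ij}=\Ric^\omega{}_{(ij)}+L(\bm\xi)_{ij}$. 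The substantive work is to verify this identity and then diagonalize $L$: it acts by $(n-2)$ on trace-free symmetric tensors and by $2(n-1)$ on the pure trace. For $n>2$ the operator is invertible, so $\bm R_D{}_{ij}=0$ has a unique solution $\bm\xi_{(ij)}$ — the Schouten tensor of $\nabla^{\tau,\omega}$ — which together with $\bm\xi_{[ij]}$ fixes $\bm\xi_i$ and hence the connection. For $n=2$ the factor $(n-2)$ degenerates: $L$ annihilates the trace-free symmetric part, so $\bm R_D{}_{ij}=0$ pins down only the trace and skew parts of $\bm\xi$, leaving a free block $\bm\xi_{(ij)}\in\Gamma(S^2_0T^*M\otimes\cL^{-2})$.

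Finally I would eliminate the gauge choices. Normality is frame-independent — indeed $F_D{}^i{}_j$ is well defined precisely when $\bm F_D{}^I{}_+=0$ — so the construction is $P$-equivariant. Because $\bm\theta^i$ is an isomorphism, the shift $\nabla^{\hat\tau}=\nabla^\tau+\bm\theta_j\bm z^j$ of \eqref{Transformation rules for tractor connections1} realizes any change of Weyl connection through the special-conformal ($\bm z$) part of $P$, while changes of $\bm e^i$ are absorbed by $\SO(p,q)$; thus all initial choices are related by $P$-gauge. Note that once $\nabla^\tau$ is fixed the residual special-conformal gauge must satisfy $\bm\theta_j\bm z^j=0$, i.e. $\bm z=0$, and so it leaves $\bm\xi_i$ untouched. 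Consequently for $n>2$ the normal connection is unique up to $P$-gauge, while for $n=2$ the residual trace-free block is genuine and descends to an affine space of compatible normal connections modeled on $\Gamma(S^2_0T^*M)$.
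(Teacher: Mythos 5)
Your proposal is correct and follows essentially the same route as the paper's proof: fix a reference tractor decomposition, express the two normality conditions as component equations (your operator $L(\bm\xi)_{ij}=(n-2)\bm\xi_{(ij)}+\bm\xi_k{}^k\eta_{ij}$ is exactly the content of the paper's equations \eqref{normality_conditions}, with the $(n-2)$ and $2(n-1)$ eigenvalues appearing there as explicit coefficients), and then absorb the choices of frame and Weyl connection into $P$-gauge. Your explicit check that the residual special-conformal gauge fixing $\nabla^\tau$ forces $\bm z=0$, so that the $n=2$ trace-free block is genuine freedom rather than gauge, is a detail the paper leaves implicit, but otherwise the two arguments coincide.
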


\begin{proof}
Fixing a reference tractor decomposition, the curvature of a tractor connection $D$ can be written as
\begin{equation*}
F^I{}_J=\begin{pmatrix}F_\tau+\bm\theta_{k}\wedge\bm\xi^k & -d^{\tau,\omega}\bm\theta_{j} & 0 \cr -d^{\tau,\omega}\bm\xi^{i} & F_\omega{}^i{}_j+\bm\theta^{i}\wedge\bm\xi_{j}+\bm\xi^{i}\wedge\bm\theta_{j} & d^{\tau,\omega}\bm\theta^{i} \cr 0 & d^{\tau,\omega}\bm\xi_{j} & -F_\tau-\bm\theta_{k}\wedge\bm\xi^k \end{pmatrix},
\end{equation*}
where $F_\tau$ denotes the curvature of $\nabla^\tau$, $F_\omega{}^i{}_j$ the curvature of $\nabla^\omega$, and $d^{\tau,\omega}$ the exterior covariant derivative on the corresponding associated bundle.

The normality condition is then equivalent to
\begin{align*}
	d^{\tau,\omega}\bm\theta^{i}=0,
	& &
	\bm\xi_{[i}{}_{j]}=-\frac{1}{2}\bm{F}_\tau{}_{ij}=-\frac{1}{n-2}\bm F_\omega{}^k{}_{[i}{}_{|k|j]},
	& &
	\bm\xi^k{}_k=-\frac{1}{2(n-1)}\bm{F}_\omega{}^{kl}{}_{kl},
\end{align*}
\begin{equation}\label{normality_conditions}
	\bm\xi_{(i}{}_{j)}-\frac{1}{n}\bm\xi^k{}_k\eta_{ij}=-\frac{1}{n-2}\Big(\bm{F}_\omega{}^k{}_{(i}{}_{|k|j)}-\frac{1}{n}\bm{F}_\omega{}^{kl}{}_{kl}\eta_{ij}\Big).
\end{equation}

Note that, for $n>2$ these completely determine $\nabla^\omega$ and $\bm\xi_{ij}$ in terms of $\nabla^\tau$ and $\bm\theta^{i}$, while for $n=2$ the symmetric trace-free part of $\bm\xi_{ij}$ is left undetermined.

Let $D$ and $\hat D$ denote a pair of normal tractor connections compatible with a conformal metric $\bm g$, that is, such that
\begin{align*}
\bm g=\eta_{ij}\bm\theta^i\bm\theta^j=\eta_{ij}\hat{\bm\theta}^i\hat{\bm\theta}^j.
\end{align*}
Up to a $P$-gauge transformation we can assume
\begin{align*}
\hat{\bm\theta}^i=\bm\theta^i,
& &
\nabla^{\hat\tau}=\nabla^\tau,
\end{align*}
and the normality condition determines $\nabla^{\hat\omega}=\nabla^\omega$ and $\hat{\bm\xi}_i\hat{\bm\theta}^i-\bm\xi_i\bm\theta^i=q$, with $q\in\Gamma(S^2_0 T^*M)$ identically zero for $n>2$ and arbitrary for $n=2$.
\end{proof}

This result dates all the way back to Cartan \cite{cartan_les_1923} and Thomas \cite{thomas_conformal_1926}, see also \cite{bailey_thomass_1994}. It provides (for $n>2$) a canonical normal tractor connection on $\cT$ associated with a given conformal frame (conformal metric). Such normal Cartan connection in conformal geometry thus plays an equivalent role to the spin (Levi-Civita) connection in Riemannian geometry.

\subsection{Scale tractors and the tractor Einstein equation}\label{Subsec: Scale tractor and the Einstein equation}

\subsubsection{Scale tractors and induced decompositions}
Given a non-degenerate tractor connection $D:\Gamma(\cT) \to \Gamma(T^*M\otimes \cT)$ on a tractor bundle $\cT \to M$ we can define a preferred class of tractors, associated with non-degenerate scales. More precisely, a choice of non-degenerate tractor connection $D$ defines a differential operator $T^I:\Gamma(\cL)\to\Gamma(\cT)$, assigning to each non-degenerate scale ${\bm\sigma}$ a \emph{scale tractor} $T^I({\bm\sigma})$ defined uniquely by the conditions
\begin{align*}
\bm X_IT^I({\bm\sigma})={\bm\sigma},
& &
\bm X_I DT^I({\bm\sigma})=0,
& &
Z_I{}^i\bm D_i T^I({\bm\sigma})=0,
\end{align*}
for some --- and therefore for any --- tractor frame $(\bm X^I,\bm Y^I,Z^I{}_j)$. In a reference tractor decomposition we can write the scale tractor as
\begin{align*}
	T^I({\bm\sigma}) = {\bm\sigma} \begin{pmatrix} 1 \\ \bm\Upsilon^i \\ -\frac{1}{2}|\bm \Upsilon|^2-\frac{1}{n}\big(\bm Q_k(\bm\Upsilon^k)-\bm\xi_k{}^k\big) \end{pmatrix}
\end{align*}
with $\bm\theta^k\bm\Upsilon_k = \bm\sigma^{-1}\nabla^\tau\bm\sigma$ and $Q = \bm\theta^k\bm Q_k$ defined as in \eqref{Q operator Def}.

In turn, we can use the scale tractor $T^I({\bm\sigma})$ to define a preferred tractor frame $(\bm X^I,\bm Y_{\bm\sigma}{}^I,Z_{\bm\sigma}{}^I{}_j)$
\begin{align*}
\bm Y_{\bm\sigma}{}^I:={\bm\sigma}^{-1}T^I({\bm\sigma})-\tfrac{1}{2}|\bm\sigma^{-1}T({\bm\sigma})|^2\bm X^I = \begin{pmatrix} 1 \\ \bm\Upsilon^i \\ -\frac{1}{2}|\bm\Upsilon|^2 \end{pmatrix},
& &
\bm E_{\bm\sigma}{}^I = Z_{\bm\sigma}{}^I{}_j\bm\theta^j := D^{\tau_{\bm\sigma}}\bm X^I = \begin{pmatrix} 0 \\ \bm\theta^i \\ -\bm\Upsilon_k\bm\theta^k \end{pmatrix},
\end{align*}
where $D^{\tau_{\bm\sigma}}\bm X^I:=\bm\sigma D(\bm\sigma^{-1}\bm X^I)$, and we obtain an induced tractor decomposition of $\cT$ for each non-degenerate scale
\begin{align*}
A(\bm\sigma):\cT\to\cL\oplus\cE\oplus\cL^{-1}.
\end{align*}
Note that such a decomposition can be equivalently defined by requiring the that the induced Weyl connection $\nabla^\tau = \nabla^{\tau_{\bm\sigma}}$ on $\cL$ satisfies $\nabla^{\tau_{\bm\sigma}}\bm\sigma = \nabla^\tau\bm\sigma -\bm\sigma\bm\Upsilon_k\bm\theta^k = 0$. Also note that the transformation rules between decompositions corresponding to distinct choices of non-degenerate scales are exactly the same as the transformation rules in the usual definition of the standard tractor bundle from \cite{bailey_thomass_1994}
\begin{equation*}
	A(\hat{\bm\sigma})\circ A(\bm\sigma)^{-1}= 
	\begin{pmatrix}
		 1 & 0 & 0\\
		 \bm z^{i} & \delta^i{}_j & 0 \\
		 -\frac{1}{2}|\bm z|^2 & -\bm z_{j} & 1
	 \end{pmatrix}
\end{equation*}
with $\hat{\bm\sigma} = \lambda \bm\sigma$ and $\bm\theta^k\bm z_k = -\lambda^{-1} d\lambda$.

\begin{DetailedCheckForTransformationRules}
	
\XDnote{These transformation rules can be checked as follows:
	They must be of the form \eqref{eq:tractor_transformation_rules} with $\bm z_i$ determined by the reasoning below.

On the one hand we have	
\begin{align*}
\hat{\bm E}^I :&= \hat{\bm \sigma} D (\hat{\bm\sigma}^{-1} \bm X)^I = \bm \sigma D (\bm\sigma^{-1} \bm X)^I - \lambda^{-1} d\lambda \; \bm X^I \\ &= \bm E^I - \lambda^{-1} d\lambda \; \bm X^I.
\end{align*}
On the the other hand
\begin{align*}
\hat{\bm E}^I :&= \hat{Z}^I_i \bm \theta^i = ( Z^I_i + \bm X^I \bm z_i ) \bm \theta^i \\
&= \bm E^I + (\bm z_i \bm \theta^i) \bm X^I.
\end{align*}
Comparing the two expressions gives $\bm z_i \bm \theta^i = - \lambda^{-1} d\lambda$.
}

\XDnote{ One obtains the transformation rules from Bailey, Eastwood and Gover \cite{bailey_thomass_1994} (p 1197) by the replacement $\lambda \mapsto \lambda^{-1}$. This is coherent with a previous remark and the fact that with $\hat{{\bm\sigma}} = \lambda {\bm\sigma}$ one has
	\begin{equation*}
	\hat{g} := \hat{\bm\sigma}^{-2} \bm{g} =  \lambda^{-2} \bm\sigma^{-2} \bm{g} = \lambda^{-2} g.
	\end{equation*}
	which is the opposite convention than the one taken in \cite{bailey_thomass_1994}.
}	
	 
\end{DetailedCheckForTransformationRules}

\subsubsection{Tractor-Einstein equations}\label{Subsubsec: Tractor-Einstein equation}

So far, we have seen that conformal metrics $\bm g$ on $M$ correspond to normal tractor connections $D$ on the tractor bundle $\cT$ (together with additional data if $n=2$). We have also seen that metric representatives $g$ of $\bm g$ correspond to non-degenerate scales $\bm\sigma$, and that these give rise to scale tractors $I^I=T^I(\bm\sigma)$. This allows for a full description of pseudo-Riemannian geometry in terms of tractor calculus/tractor geometry, which is particularly powerful in the study of Einstein geometry.

\begin{theorem}[see Bailey-Eastwood-Gover \cite{bailey_thomass_1994} and Calderbank \cite{calderbank_mobius_2006,burstall_conformal_2010}, for $n=2$]
Einstein metrics $g$ on $M$ are in one-to-one correspondence with pairs $(D,I^I)$ consisting of 1) a normal tractor connection $D$ and 2) a non-degenerate covariantly constant tractor section $I^I$, considered up to $P$-gauge transformations (and affine translations of $D$ if $n=2$).
\end{theorem}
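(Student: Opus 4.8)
The plan is to build the bijection by composing two correspondences already available in the excerpt and then isolating the single differential equation that distinguishes Einstein representatives. First I would record that, by Proposition~\ref{Unicity of Normal Connection}, the conformal class of $g$ determines a normal tractor connection $D$ (for $n>2$ uniquely up to $P$-gauge; for $n=2$ up to the affine translations stated there), with conformal frame $\bm\theta^i$ satisfying $\bm g=\eta_{ij}\bm\theta^i\bm\theta^j$. The choice of the representative $g$ inside its conformal class is exactly the choice of a non-degenerate scale $\bm\sigma\in\Gamma(\cL)$ with $g=\bm\sigma^{-2}\bm g$, and to this scale I attach the scale tractor $I^I:=T^I(\bm\sigma)$. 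This produces the forward map $g\mapsto(D,I^I)$; note that $I^I$ is automatically non-degenerate, since $\bm X_I I^I=\bm\sigma$ is, so the only thing left to prove is that $I^I$ is covariantly constant precisely when $g$ is Einstein.

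The core of the argument is the computation of $DI^I$ in the tractor decomposition adapted to $\bm\sigma$, i.e. in the scale for which $\nabla^{\tau_{\bm\sigma}}\bm\sigma=0$. In this frame $\bm\Upsilon_i=0$, so $T^I(\bm\sigma)$ has components $(\bm\sigma,\,0,\,\tfrac1n\bm\xi_i{}^i\bm\sigma)$, and applying the matrix form of $D$ from Section~\ref{Subsec: Tractor connections} gives, componentwise: a top slot equal to $\nabla^{\tau_{\bm\sigma}}\bm\sigma=0$; a middle slot built from the full trace-free part of $\bm\xi_{ij}$, namely $-\bm\sigma\,(\bm\xi^i{}_j-\tfrac1n\bm\xi_k{}^k\delta^i{}_j)\bm\theta^j$; and a bottom slot $\nabla^{-\tau_{\bm\sigma}}(\tfrac1n\bm\xi_i{}^i\bm\sigma)$. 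Here I would use the normality conditions \eqref{normality_conditions}. Since the Weyl connection attached to $\bm\sigma$ is the Levi-Civita connection of $g$, the scale $\bm\sigma$ is parallel, its curvature $\bm F_{\tau ij}$ vanishes, and hence $\bm\xi_{[ij]}=-\tfrac12\bm F_{\tau ij}=0$; thus only the symmetric part of $\bm\xi_{ij}$ survives in the middle slot. Normality then identifies the symmetric trace-free part of $\bm\xi_{ij}$ with $-\tfrac{1}{n-2}$ times the trace-free Ricci tensor of $g$, so the middle slot vanishes if and only if $g$ is Einstein.

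It then remains to dispatch the bottom slot and to run the converse. For the bottom slot I would invoke the contracted second Bianchi identity: once the trace-free Ricci tensor vanishes in dimension $n>2$, the scalar curvature, equivalently the trace $\bm\xi_i{}^i$, is automatically constant, so $\nabla^{-\tau_{\bm\sigma}}(\tfrac1n\bm\xi_i{}^i\bm\sigma)=0$ and the full equation $DI^I=0$ holds. For the converse I start from a pair $(D,I^I)$ with $D$ normal and non-degenerate and $I^I$ non-degenerate with $DI^I=0$, set $\bm\sigma:=\bm X_I I^I$ and $g:=\bm\sigma^{-2}\bm g$, and observe that $DI^I=0$ forces in particular $\bm X_I DI^I=0$ and $Z_I{}^i\bm D_i I^I=0$; by the uniqueness clause in the definition of the scale tractor this yields $I^I=T^I(\bm\sigma)$, whereupon the computation above runs backwards to show $g$ is Einstein. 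Finally I would check compatibility with $P$-gauge: such transformations fix $\bm g$ and $g$ while moving $(D,I^I)$ within a single orbit, so both constructions descend to mutually inverse bijections on gauge-equivalence classes.

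The step I expect to be the main obstacle is the normality bookkeeping in the core computation: pinning down the precise constant and sign relating $\bm\xi_{(ij)}$ to the trace-free Ricci tensor from \eqref{normality_conditions}, and verifying cleanly that passing to the scale $\bm\sigma$ really does kill the antisymmetric part $\bm\xi_{[ij]}$ and reduce the bottom slot to a derivative of the scalar curvature. The $n=2$ case needs a short separate remark, since there the normal connection carries the extra $\Gamma(S^2_0T^*M)$ of freedom, and the ``up to affine translations of $D$'' clause is exactly what absorbs it.
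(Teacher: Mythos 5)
Your proposal is correct and is essentially the paper's own argument: both proofs reduce to computing $DI^I$ componentwise in the tractor decomposition adapted to the scale $\bm\sigma=\bm X_II^I$ (where $\bm\Upsilon_i=0$, so the middle slot of $DI^I$ is exactly the trace-free part of $\bm\xi_{ij}$ contracted with $\bm\theta^j$), and then using the normality conditions \eqref{normality_conditions} to identify that trace-free part with the trace-free Ricci tensor of $g$. The only notable difference is one of direction and emphasis: the paper argues from the pair $(D,I^I)$ to the Einstein metric and leaves the inverse map implicit in its chain of equivalences, whereas you construct the inverse explicitly via the scale tractor $T^I(\bm\sigma)$ and its uniqueness characterization, which obliges you to invoke Schur's lemma (constancy of the scalar curvature) to kill the bottom slot --- a step the paper's presentation glosses over but which is genuinely needed for the metric-to-pair direction.
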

\begin{proof}	
Let $D$ be a non-degenerate tractor connection and $I^I$ a non-degenerate covariantly constant tractor section. Since $I^I$ is covariantly constant, in particular its norm is a constant, we will write $ \tilde{\Lambda} = -I^2$. In a reference tractor decomposition,
\begin{align*}
	D&=\begin{pmatrix}\nabla^\tau & -\bm\theta_{j} & 0 \cr -\bm\xi^{i} & \nabla^\omega & \bm\theta^{i} \cr 0 & \bm\xi_{j} & \nabla^{-\tau} \end{pmatrix},
	& &
	I^I = \begin{pmatrix} \bm\sigma \\ \mu^i\\ -\tfrac{|\mu|^2+\tilde{\Lambda}}{2\bm\sigma}\end{pmatrix}.
\end{align*}
and
\begin{equation}\label{Tractor Einstein equations 1}
	DI^I = \begin{pmatrix} \nabla^\tau \bm\sigma - \bm\theta_i \mu^i \\ \nabla^{\omega} \mu^i - \bm\xi^i \bm\sigma -\tfrac{ |\mu|^2+\tilde{\Lambda}}{2\bm\sigma} \bm\theta^i \\
\mu^i \bm\xi_i + \tfrac{1}{2} \bm\sigma^{-2} \nabla^{\tau} \bm\sigma ( |\mu|^2+\tilde{\Lambda}) + \tfrac{1}{2} \bm\sigma^{-1} d|\mu|^2
 \end{pmatrix} =0
\end{equation}
introducing the notation $\bm \Upsilon_i\bm\theta^i := \bm\sigma^{-1} \nabla^{\tau} \bm\sigma$, the first two equations are equivalent to
\begin{align}\label{Tractor Einstein equations 2}
	\mu^i = \bm\sigma \bm\Upsilon^i,
	& &
	Q \left ( \bm\Upsilon_i \right) -\bm\xi_i-\tfrac{1}{2} \tilde{\Lambda} \bm\sigma^{-2} \bm\theta_i= 0,
	\end{align}
where $Q:\Gamma(\cL^{-1}\otimes\cE^*)\to \Gamma(T^*M\otimes\cL^{-1}\otimes\cE^*)$ is the first-order non-linear differential operator given by \eqref{Q operator Def}. It only takes a direct calculation to check that when \eqref{Tractor Einstein equations 2} hold the third equation in \eqref{Tractor Einstein equations 1} follows identically. 
 All in all the requirements that $I^I$ is non degenerate and covariantly constant are strictly equivalent to equations \eqref{Tractor Einstein equations 2}. In particular this set of equations does not depend on the choice of tractor decomposition. We will now make use of this fact to pick a particularly adapted tractor frame.

Since, $\bm\sigma=\bm X_II^I$ is a non-degenerate scale, it defines a preferred generalized conformal frame
\begin{equation*}
	\bm E_{\bm\sigma}{}^I=D^{\tau_{\bm\sigma}}\bm X^I:=\bm\sigma D(\bm\sigma^{-1}\bm X^I).
\end{equation*}
Decomposing the tractor bundle with respect to this preferred frame we have $\nabla^{\tau}=\nabla^{\tau_{\bm\sigma}}$. In particular $\bm\Upsilon_i =0$ and equations \eqref{Tractor Einstein equations 2} now take the simple form
\begin{align*}
	\mu^i =0, & & \bm\xi^{i}=-\frac{\tilde{\Lambda}}{2}\bm\sigma^{-2}\bm\theta^{i}.
\end{align*}
Finally, the normality conditions \eqref{normality_conditions} for $D$ in this particular frame become:
\begin{align*}
d^{\tau,\omega}\bm\theta^{i}=0,
& &
\bm{F}_{\tau}=0,
& &
\bm{F}_{\omega}{}^{kl}{}_{kl}=n(n-1)\tilde{\Lambda}\bm\sigma^{-2},
\end{align*}
\begin{equation*}
\bm{F}_\omega{}^k{}_{(i}{}_{|k|j)}-\frac{1}{n}\bm{F}_\omega{}^{kl}{}_{kl}\eta_{ij}=0.
\end{equation*}
All in all, these imply that $\nabla^{\omega}$ and $\nabla^\tau$ are induced by the Levi-Civita connection of the pseudo-Riemannian metric $g=\bm\sigma^{-2}\eta_{ij}\bm\theta^{i}\bm\theta^{j}$, which furthermore satisfies the Einstein equation with scalar curvature $R=n(n-1)\tilde{\Lambda}$.
\end{proof}

\begin{corollary}
In terms of tractor geometry, the scalar curvature of an Einstein metric $g$ is proportional to the norm of the corresponding covariantly constant tractor
\begin{align*}
R=-n(n-1)|I|^2.
\end{align*}
\end{corollary}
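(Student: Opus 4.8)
The plan is to read the result off directly from the proof of the preceding theorem, so that essentially no new computation is required. Recall that in establishing the correspondence between Einstein metrics and pairs $(D,I^I)$ we set $\tilde\Lambda := -I^2$, denoting minus the norm of the covariantly constant tractor, and concluded that the associated pseudo-Riemannian metric $g=\bm\sigma^{-2}\eta_{ij}\bm\theta^i\bm\theta^j$ satisfies the Einstein equation with scalar curvature $R=n(n-1)\tilde\Lambda$. The corollary should then follow simply by substituting $\tilde\Lambda=-|I|^2$.

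Before making this substitution I would first confirm that $|I|^2$ is a genuine constant on $M$, so that the right-hand side of the claimed identity is well-defined. This is immediate from the fact that the tractor connection $D$ preserves the tractor metric $h_{IJ}$ together with the hypothesis $DI^I=0$: differentiating $|I|^2=h_{IJ}I^II^J$ gives $d|I|^2=2h_{IJ}(DI^I)I^J=0$. Hence $|I|^2$ is constant on $M$, consistent with the constancy of $\tilde\Lambda$ that was used implicitly in the theorem, and the quantity appearing in the corollary is meaningful.

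With this in hand, substituting $\tilde\Lambda=-I^2=-|I|^2$ into the relation $R=n(n-1)\tilde\Lambda$ derived at the end of the theorem's proof yields $R=-n(n-1)|I|^2$, as claimed. The only point requiring care is the consistency of sign and normalization conventions between the definition $\tilde\Lambda=-I^2$ and the scalar-curvature formula; since both appear explicitly in the theorem's proof there is no genuine obstacle, and the statement follows by direct substitution.
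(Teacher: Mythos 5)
Your proposal is correct and matches the paper's reasoning: the corollary is an immediate consequence of the theorem's proof, where $\tilde\Lambda=-I^2$ is defined and the conclusion $R=n(n-1)\tilde\Lambda$ is derived, so the claim follows by substitution. Your check that $|I|^2$ is constant via metric compatibility of $D$ and $DI^I=0$ is the same observation the paper makes in passing at the start of the theorem's proof.
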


\subsubsection{The standard tractor bundle}\label{Subsubsec: The standard tractor bundle}
Let us end this section by discussing how the definition of the tractor bundle presented above and the more usual definition found in \cite{bailey_thomass_1994}.

The \emph{standard tractor bundle} $\cT^S\to M$ is usually defined as a quotient of the second order jet bundle $\cJ^2(LM)$ of standard scales, canonically determined by a choice of conformal metric $\gt\in\Gamma(S^2T^*M\otimes LM^2)$ on $M$ (see e.g. \cite{bailey_thomass_1994}). It is endowed with a canonical tractor metric and a canonical filtration $\cT^S\supset(\cN^S)^\perp\supset \cN^S$, where $\cN^S$ is canonically isomorphic to $LM^{-1}$ and $(\cN^S)^\perp$ (non-canonically) isomorphic to $TM\otimes LM^{-1}$.

The essential feature of the standard definition of the tractor bundle, which distinguishes it from our definition above, is the presence of a canonical differential operator $T^S:\Gamma(LM)\to \Gamma(\cT^S)$ prolonging non-degenerate scales $\sigma\in\Gamma(LM)$ to non-degenerate tractors $T^S(\sigma)\in\Gamma(\cT^S)$, and, accordingly, defining isomorphisms
\begin{equation*}
A^S(\sigma):\cT^S\to LM\oplus (TM\otimes LM^{-1})\oplus LM^{-1}.
\end{equation*}
This is the so called the \emph{Thomas operator} of the standard tractor bundle. As a consequence, the standard tractor bundle comes equipped with a canonical normal tractor connection $D^S$: this is the unique normal tractor connection compatible both with the Thomas operator and with the underlying conformal metric, see \cite{bailey_thomass_1994,curry_introduction_2018} for details.

Let $\bm g$ be a conformal metric and $\cT^S$ the associated standard tractor bundle. Clearly any tractor bundle $\cT$ as defined in subsection \ref{Subsubsec: Tractor Structure group and representations} is isomorphic to $\cT^S$ and this isomorphism is unique up to $P$-gauge transformation. On the other hand, from proposition \ref{Unicity of Normal Connection} normal tractor connections on $\cT$ are also unique up to $P$-gauge transformation. The proposition below asserts that these two freedoms are equivalent.

\begin{proposition}
Let $D$ be a normal tractor connection on $\cT$. It defines an isomorphism $\cT \to \cT^S$ where $\cT^S$ is the standard tractor bundle associated to the conformal metric induced by $D$. Moreover, the isomorphism pulls-back the standard tractor connection $D^S$ on $\cT^S$ to the connection $D$ on $\cT$.
\end{proposition}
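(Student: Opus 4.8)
The plan is to manufacture the isomorphism $\Phi\colon\cT\to\cT^S$ out of $D$ itself, using the scale-tractor operator of $D$ as a substitute for the Thomas operator $T^S$ of $\cT^S$, and then to identify $\Phi^*D^S$ with $D$ by matching these two operators. First I would record the data attached to $D$: its non-degenerate component $\bm\theta^i$ is a conformal frame, hence determines the conformal metric $\bm g$ and, through the scale frame of $\bm g$, the canonical identification $\cL\cong LM$ of the internal scale bundle with the standard scale bundle. Let $\cT^S$ be the standard tractor bundle built from $\bm g$. To define $\Phi$, choose any non-degenerate scale $\bm\sigma$: on the abstract side $D$ supplies the induced decomposition $A(\bm\sigma)\colon\cT\to\cL\oplus\cE\oplus\cL^{-1}$, while on the standard side the Thomas operator supplies the splitting $A^S(\sigma)\colon\cT^S\to LM\oplus TM\otimes LM^{-1}\oplus LM^{-1}$. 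Identifying summands via $\cL\cong LM$ and via the conformal frame $\bm e^i\colon\cE\to TM\otimes\cL^{-1}$, I set $\Phi:=A^S(\sigma)^{-1}\circ A(\bm\sigma)$.

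The first thing to verify is that $\Phi$ is independent of $\bm\sigma$. Under $\bm\sigma\mapsto\lambda\bm\sigma$ the composite $A(\lambda\bm\sigma)\circ A(\bm\sigma)^{-1}$ is the lower-triangular matrix computed at the end of the discussion of scale tractors, and — as remarked there — this coincides with the change-of-splitting rule of \cite{bailey_thomass_1994} governing $A^S(\lambda\sigma)\circ A^S(\sigma)^{-1}$. Hence the two families of splittings transform identically and $\Phi$ glues to a well-defined bundle map. Reading off components in any splitting, $\Phi$ preserves the tractor metric $h_{IJ}$, the filtration $\cN\subset\cN^\perp$ and the position tractor $\bm X^I$, so it is a genuine isomorphism of tractor bundles. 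Moreover $\Phi$ intertwines the scale tractors, $\Phi\big(T^I(\bm\sigma)\big)=T^S(\sigma)$: in the $\bm\sigma$-adapted frame one has $\bm\Upsilon=0$, so both scale tractors have components $\big(\bm\sigma,0,*\big)$, and the bottom slot is the same in each case because normality \eqref{normality_conditions} expresses $\bm\xi$ in terms of the curvature of $\nabla^\omega$ and hence fixes it to be the Schouten trace of $g=\bm\sigma^{-2}\bm g$, which is exactly the bottom slot of the standard $T^S(\sigma)$.

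Finally I transport $D^S$ back. Since $\Phi$ is a tractor-bundle isomorphism and $D^S$ is normal, $\Phi^*D^S$ is a normal tractor connection on $\cT$ inducing the same conformal metric $\bm g$ as $D$, so by Proposition \ref{Unicity of Normal Connection} the two agree up to a $P$-gauge transformation (for $n>2$; for $n=2$ up to the additional affine freedom). The point is to upgrade this to an equality. Here I would use that $D^S$ is characterised, by construction of the standard bundle, as the unique normal connection on $\cT^S$ whose scale-tractor operator is the Thomas operator $T^S$; pulling back, $\Phi^*D^S$ is the unique normal connection on $\cT$ whose scale-tractor operator is $\Phi^*T^S=T$, the scale-tractor operator of $D$ itself by the intertwining above. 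Since a non-degenerate normal connection is recovered from its scale-tractor operator — the operator fixes every adapted frame $\bm Y_{\bm\sigma}^I$ and $Z_{\bm\sigma}{}^I{}_j$, and in such a frame normality \eqref{normality_conditions} determines $\nabla^\omega$ and $\bm\xi$ from $\nabla^\tau=\nabla^{\tau_{\bm\sigma}}$ and $\bm\theta^i$ — the two connections coincide, giving $\Phi^*D^S=D$. The main obstacle is precisely this last step: Proposition \ref{Unicity of Normal Connection} only yields uniqueness modulo $P$-gauge, and the essential content is that building $\Phi$ out of $D$ (rather than choosing an arbitrary identification $\cT\cong\cT^S$) pins the scale-tractor operator and thereby annihilates the residual special-conformal gauge freedom. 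Equivalently, one may verify $\Phi^*D^S=D$ concretely by checking that, in a $\bm\sigma$-adapted frame, both connections reduce to the textbook normal tractor connection formula attached to the Levi-Civita connection of $g$ and its Schouten tensor.
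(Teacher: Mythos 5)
Your proof is correct and follows essentially the same route as the paper's: construct the isomorphism by matching the scale-induced decompositions $A(\bm\sigma)$ and $A^S(\sigma)$, use the coincidence of the transformation rules to show the map is independent of $\bm\sigma$, and identify the Thomas operator $T^S$ with the scale-tractor operator of $D$ to conclude that $D^S$ pulls back to $D$. You merely spell out what the paper compresses into ``by construction'' --- the matching of the bottom slots via normality and the Schouten trace, and the upgrade from gauge-equivalence (Proposition \ref{Unicity of Normal Connection}) to genuine equality of connections --- which is added detail rather than a different argument.
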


\begin{proof}
Given a normal tractor connection $D$ on $\cT$, let $\bm g=\eta_{ij}\bm\theta^i\bm\theta^j\in\Gamma(S^2 T^*M\otimes\cL^2)$ be the associated conformal metric and $\bm s\in\Gamma(LM\otimes\cL^{-1})$ the isomorphism defined in subsection \ref{Subsubsec: Conformal frames}. Define an ``true'' conformal metric $\tilde g=\bm s^2\bm g$, and let $\cT^S$ denote the corresponding standard tractor bundle.

For each non-degenerate scale $\bm\sigma\in\Gamma(\cL)$ we have an ``true'' non-degenerate scale $\sigma=\bm s\bm\sigma\in\Gamma(LM)$ and we can construct tractor decompositions $A(\bm\sigma):\cT\to\cL\oplus\cE\oplus\cL^{-1}$ and $A^S(\sigma):\cT^S\to LM\oplus TM\otimes LM^{-1}\oplus LM^{-1}$. Then, using the isomorphisms $\bm\theta^i: TM \to \cL \otimes \cE$ and $\bm s: LM^{-1}\to\cL^{-1}$ we obtain an isomorphism $B^D:\cT\to\cT^S$. The key point here is that $B^D$ is independent the choice of scale $\bm\sigma$. This follows from the transformation rules $A(\bm\sigma) \mapsto A(\hat{\bm\sigma})$ (see the end of subsection \ref{Subsubsec: Tractor connections}), which are precisely the same as the standard transformation rules $ A^S(\bm\sigma) \mapsto A^S(\hat{\bm\sigma})$ for the standard tractor bundle (see \cite{bailey_thomass_1994}). By construction, $B^D$ pulls-back the standard Thomas operator $T^S:\Gamma(LM)\to\Gamma(\cT^S)$ to the differential operator $T^I:\Gamma(\cL)\to\Gamma(\cT)$, and in particular, the standard tractor connection $D^S$ on $\cT^S$ to the given tractor connection $D$ on $\cT$. 
\end{proof}

As a closing remark, we wish to emphasize that the description of the tractor bundle that was given in the last two subsections is in complete parallel with the so-called Cartan-Palatini formulation of general Relativity. There, an abstract ``internal'' tangent bundle $\cE\to M$ is constructed a priori, as an associated bundle to an abstract ``internal'' principal $\SO(p,q)$-bundle over $M$, and then identified with $TM$ through the choice of an orthogonal frame $e^i\in\Gamma(T^*M\otimes\cE)$. This induces a metric $g = \eta_{ij} e^i e^j$ on $TM$ via pull-back of the metric $\eta_{ij}$ on $\cE$. There are then infinitely many metric-compatible (i.e. satisfying $\nabla \eta_{ij}=0$) connections on $\cE$  but the choice of an orthogonal frame $e^i$ singles out a preferred one: The connection obtained by the pull-back of the Levi-Civita connection $\nabla^{g}$ of $g$ from $TM$ to $\cE$ or, equivalently, as the unique connection such that its tensor product with the Levi-Civita connection satisfies $\nabla e^i = 0$.
It follows that $\SO(p,q)$-gauge transformations act non-trivially on $e^i$ and $\nabla$, but preserve the metric $g$ and its Levi-Civita connection $\nabla^{g}$.

\section{Tractor functionals}

This section contains the main new results of the present paper. We will describe three new action functionals for General Relativity based on Tractor geometry as reviewed in the previous section. Moreover we will describe the equivalence between our new Tractor actions and the Cartan-Palatini-Weyl action functional described in the introduction.

\subsection{Gauge invariant Lagrangians}
\mbox{}

Let $M$ be an oriented manifold and $\cP$ a principal $P$-bundle over $M$, and let $(\cT,h_{IJ},\bm X^I)$ denote the associated tractor bundle, $\cL$ the associated scale bundle, $(\cE,\eta_{ij})$ the associated weight-zero tangent bundle as introduced in subsection \ref{Subsec: Tractor bundles}.

We now wish to describe a construction of gauge invariant functionals on (the space of sections of associated bundles to) $\cP$, defined by integration of Lagrangian densities over $M$. By \emph{Lagrangian density} here, we mean a functional of 1) a non-degenerate tractor connection $D$, 2) a non-degenerate tractor $I^I$, and 3) a generalized conformal frame $\bm E^I$, taking values on $n$-forms on $M$, and which is invariant under $P$-gauge transformations (at least modulo exact terms).  

We will consider Lagrangian densities of the form
\begin{equation*}
\mathscr{L}=\bm L^{I_1\cdots I_{k+2}}\wedge(\star \bm E^{n-k})_{I_1\cdots I_{k+2}}\in\Gamma(\textstyle{\bigwedge^n}T^*M),
\end{equation*}
where $\bm L^{I_1\cdots I_{k+2}}\in\Gamma(\bigwedge^{k}TM^*\otimes\cL^{k-n}\otimes\bigwedge^{k+2}\cT)$ is  a tractor valued weighted $k$-form, depending on the tractor connection $D$, the tractor section $I^I$, and the generalized conformal frame $\bm E^I$. Here we also make use of the \emph{internal star operator} $\star:\Gamma(\bigwedge^{n-k}\cT)\to\Gamma(\bigwedge^{k+2}\cT^*)$ and write
\begin{equation*}
(\star \bm E^{n-k})_{I_1\cdots I_{k+2}}=\frac{1}{(n-k)!}\epsilon_{I_1\cdots I_{n+2}}\bm E^{I_{k+3}}\wedge \cdots \wedge \bm E^{I_{n+2}},
\end{equation*}
with $\epsilon_{I_1\cdots I_{n+2}}$ the \emph{internal volume form} along the fibers of $\cT$.

Note that, being constructed in terms of wedge products and contractions of tractor valued forms, the $n$-forms $\mathscr{L}$ are manifestly invariant under $P$-gauge transformations --- in particular under Weyl rescalings.

To relate them to more familiar quantities, allowing for a better interpretation of their geometric meaning, we can introduce a reference tractor frame and describe $\mathscr{L}$ in terms of the corresponding components of $D$, $I^I$ and $\bm E^I$. This has the effect of ``breaking'' the manifest $P$-invariance into more usual (still manifest) $W$-invariance and more complicated (non-linear) affine-invariance.

Effectively, this can be achieved from decompositions of the $(n-k)$-forms $(\star\bm E^{n-k})_{I_1\cdots I_{k+2}}$, induced by each choice of tractor frame. We find
\begin{align*}
(\star \bm E^{n-k})_{I_1\cdots I_{k+2}}
=\frac{(k+2)!}{k!}\bm Y_{[I_1}\bm X_{I_2}Z_{I_3}{}^{i_1}\cdots Z_{I_{k+2}]}{}^{i_k}(\star\bm e^{n-k})_{i_1\cdots i_k}
\end{align*}
where 
\begin{equation*}
(\star\bm e^{n-k})_{i_{1}\cdots i_{k}}=\frac{1}{(n-k)!}\epsilon_{i_1\cdots i_{n}}\bm e^{i_{k+1}}\wedge \cdots \wedge \bm e^{i_{n}},
\end{equation*}
is defined with respect to the internal Hodge star operator along the fibers of $\cE$, which in turn implies
\begin{align*}
\mathscr{L}
&=\frac{(k+2)!}{k!}\bm L^{-+i_1\cdots i_{k}}{}_{i_1\cdots i_{k}}(\star\bm e^n)
\end{align*}
with
\begin{align*}
\frac{1}{k!}\bm L^{-+i_1\cdots i_{k+2}}{}_{j_1\cdots j_k}\bm e^{j_1}\wedge\cdots \wedge\bm e^{j_k} = \bm L^{I_1\cdots I_{k+2}}\bm Y_{I_1}\bm X_{I_2}Z_{I_3}{}^{i_1}\cdots Z_{I_{k+2}}{}^{i_k}.
\end{align*}
With this in mind, it becomes easy to construct Lagrangian densities describing important quantities in conformal and pseudo-Riemannian geometry. For example, given a metric $g=\bm\sigma^{-2}\bm\theta_i\bm\theta^i$, by requiring $D$ to be a compatible normal tractor connection, $I^I=T(\bm\sigma)$ a scale tractor, and $\bm E^I=D^{\tau_{\bm\sigma}}\bm X^I$ the corresponding generalized conformal frame, we can recover algebraic curvature invariants of $g$ in terms of products of the curvature $F_D$, the covariant derivative $DI$ and the tractor norm $|I|^2$. Schematically, we have
\begin{align*}
\bm\sigma^{2k-n}\bm X\;\bm Y\; F_D{}^k \wedge (\star \bm E^{n-2k})
&=\Weyl^k dv,
\cr
\bm\sigma^{k-n}\bm X\;\bm Y\;(DI)^k\wedge (\star \bm E^{n-k})
&=(\Ric-\tfrac{1}{n}R g)^k dv,
\cr
\bm\sigma^{-n}\bm X\;\bm Y\; |I|^{2k} (\star \bm E^{n})
&=R^{k}dv.
\end{align*}
Powers of the full Riemann tensor can also be recovered as
\begin{align*}
\bm\sigma^{2k-n}\bm X\;\bm Y\; \Big(F_D+2\bm\sigma^{-1}DI\wedge\bm E -\bm\sigma^{-2}|I|^2\bm E\wedge\bm E\Big)^k\wedge (\star \bm E^{n-2k})
&=\Riem^k dv.
\end{align*}
It should also be noted that the Lagrangian densities defined by the Pontryagin form in 4-dimensions and by the Chern-Simons form in 3-dimensions are also recovered, respectively, as the Pontryagin and Chern-Simons forms of the tractor connection.

\subsection{Simple examples in low dimensions}
\mbox{}

In low dimensions there are a few natural invariant functionals that one can consider. We briefly review these and refer to the literature for more details.

On three dimensional manifolds, one can consider the Chern-Simons functional
\begin{align*}
S_{CS}[A] &= \frac{1}{4}\int A^I{}_J \wedge dA^J{}_I + \frac{2}{3}A^I{}_J \wedge A^J{}_K \wedge A^K{}_I.
\end{align*}
Its critical points correspond to conformally flat 3D conformal metrics. See also \cite{horne_conformal_1989}.
 
In four dimension, there are several natural functionals to consider. The Pontryagin functional
 \begin{align*}
S_{P}[A]&= \frac{1}{4}\int F^I{}_J \wedge F^J{}_I
 \end{align*}
 is topological.
 
 In this dimension the Yang-Mills functional
 \begin{align*}
S_{YM}[A] &= \frac{1}{4}\int F^I{}_J \wedge \star F^J{}_I
\end{align*}
 is invariant under $P$-gauge transformation because the hodge duality is --- here the hodge dual is taken with respect to the conformal metric defined by the tractor connection. Critical points of this functional are generically difficult to interpret. However, as was shown in \cite{merkulov_supertwistor_1985}, if one restricts the variational principle to connection which are \emph{normal} then critical points correspond to Bach-free conformal metrics, which are the field equations of conformal gravity.

Finally, one can also consider
 \begin{align*}
S\left[A,I\right] &= \frac{1}{4}\int\bm\sigma^{-1}  \bm X^I I^J F^{KL}\wedge F^{MN} \epsilon_{IJKLMN}
 \end{align*}
 which, when restricted to normal connection, is also a functional for conformal gravity, see \cite{kaku_gauge_1977}.

\subsection{First order functionals for general relativity}\label{Subsec: First order functionals for general relativity}
\mbox{}

We now turn to the description of a first order tractor functional whose critical points correspond with the solutions of the (full) tractor Einstein equation.

Starting with a non-degenerate tractor connection $A^I{}_J$, a non-degenerate tractor section $I^I$ and a generalized conformal frame $\bm E^I$, consider
\begin{align}\label{Tractor functionals: tractor_functional1}
S[A,I,\bm E]=\int_M\bm L^{IJKL}\wedge(\star \bm E^{n-2})_{IJKL},
\end{align}
with
\begin{align*}
\bm L^{IJKL}=\bm\sigma^{1-n}\bm X^I\Big[I^J\Big(\tfrac{1}{2}F_A{}^{KL}+\alpha\bm\sigma^{-2}\bm E^K\wedge\bm E^L\Big) + \bm Y^JD\Big(I^K-|I|^2\bm\sigma^{-1}\bm X^K\Big)\wedge\bm E^L\Big]
\end{align*}
Here, $\bm\sigma$ is the scale induced by $I^I$, $\bm Y^I$ is the tractor frame induced by $\bm E^I$, and $\alpha=\alpha(|I|^2)$ is a quadratic polynomial of the tractor norm, given by
\begin{align*}
\alpha=(|I|^2+\tilde\Lambda)^2+\tfrac{1}{2}|I|^2-\tfrac{n-2}{2n}\tilde\Lambda,
& &
\tilde\Lambda\in\R.
\end{align*}

\begin{theorem}
For $n\geq 3$, a triple $(A^I{}_J,I^I,\bm E^I)$ is a critical points of \eqref{Tractor functionals: tractor_functional1} if and only if $A^I{}_J$ is a normal tractor connection, $I^I$ is a covariantly constant tractor with norm $|I|^2=-\tilde\Lambda$, and $\bm E^I$ is the corresponding generalized conformal frame.
\end{theorem}

\begin{proof}
This follows from a straightforward, if somewhat cumbersome, computation of the first variation of $S[A,I,\bm E]$.

One starts with the variation with respect to the tractor connection $A^I{}_J$. Here it is convenient to decompose the connection with respect to the generalized conformal frame $\bm E^I$ and consider the variation of each component separately. It should be emphasized that this can be done while maintaining manifest $P$-gauge invariance. We obtain the following equations
\begin{align*}
\bm E_I\Big(D(\bm\sigma^{-1}\bm X^I)-\bm\sigma^{-1}\bm E^I\Big)=0,
& &
\bm E_II^I=0,
\end{align*}
which are algebraic constraints on the generalized conformal frame $\bm E^I$, and
\begin{align*}
\bm E_IDI^I=0,
& &
\bm E^I\Big(d^A(\bm\sigma^{-1}\bm E_I)-\tfrac{1}{2(n-2)}\bm\sigma^{-2}\bm X_JDI^J\wedge \bm E_I\Big)=0,
\end{align*}
which are first order differential equations for $I^I$ and $A^I{}_J$.

Considering the variation with respect to the generalized conformal frame $\bm E^I$, we obtain another pair of algebraic constraints, now on $I^I$ and $A^I{}_J$,
\begin{align*}
\bm X_IDI^I=0,
& &
\Ric\Big(F_A+\tfrac{2n\alpha-2(n-1)|I|^2}{n-2}\bm\sigma^{-2}\bm E\wedge\bm E\Big)=0.
\end{align*}
Here, we have made use of the previous equations of motion.

Finally, taking the variation with respect to the tractor section $I^I$, we find a final constraint
\begin{align*}
|I|^2=-\tilde\Lambda.
\end{align*}
No other equations are imposed (in other words, some of the equations of motion obtained from the variation with respect to $I^I$ are already implied by the previous equations of motion). This is a consequence of  the $P$-gauge invariance of \eqref{Tractor functionals: tractor_functional1}.

Putting all of the equations together, we obtain
\begin{align*}
F_A{}^I{}_J\bm X^J=0 
& &
\Ric(F_A)=0,
\end{align*}
which are the normality conditions for $A^I{}_J$ (see subsection \eqref{Subsubsec: Tractor connections})
\begin{align*}
DI^I=0,
& &
|I|^2=-\tilde\Lambda,
\end{align*}
which are Einstein equations in their tractor form (see subsection \ref{Subsubsec: Tractor-Einstein equation}) and
\begin{align*}
\bm E^I=\bm\sigma D(\bm\sigma^{-1}\bm X^I),
\end{align*}
which determines $\bm E^I$ as the generalized conformal frame associated to $\bm\sigma$ (see discussion at the end of subsection \ref{Subsubsec: Tractor connections}).
\end{proof}

To relate \eqref{Tractor functionals: tractor_functional1} with more familiar constructions, let us provide here its description in components. Fixing a reference tractor decomposition we can write
\begin{align}\label{Tractor functionals: tractor_components}
	D=\left(\begin{matrix}\nabla^\tau & -\bm\theta_j & 0 \\ -\bm\xi^i & \nabla^\omega & \bm\theta^i \\ 0 & \bm\xi_j & \nabla^{-\tau} \end{matrix}\right),
	& &
	E^I = \begin{pmatrix} 0 \\ \bm e^i\\ -\bm z_j \bm e^j \end{pmatrix},
	& & 
	I^I = \begin{pmatrix} \bm\sigma \\  \mu^i \\ -\tfrac{|\mu|^2-\kappa}{2\bm\sigma} \end{pmatrix},
\end{align}
and the functional \eqref{Tractor functionals: tractor_functional1} can be then be rewritten in terms of tractor components as
\begin{align}\label{Tractor functionals: tractor_functional1 decomposition}
S=\int_M \bm\sigma^{2-n} L^{ij}\wedge (\star \bm e^{n-2})_{ij},
\end{align}
with
\begin{multline*}
L^{ij}
=\tfrac{1}{2}F_\omega{}^{ij}
+Q(\bm\sigma^{-1}\mu^i)\wedge\bm e^j+\bm\sigma^{-2}\Big(\alpha\bm e^i
-\tfrac{1}{2}\kappa\bm\theta^i\Big)\wedge\bm e^j
\cr
+\bm\sigma^{-1}\mu^id^{\tau,\omega}\bm\theta^j-\bm\xi^i\wedge(\bm e^j-\bm\theta^j)-(\bm z^i-\bm\sigma^{-1}\mu^i)(\bm\Upsilon^k-\bm\sigma^{-1}\mu^k)\bm\theta_k\wedge\bm e^j,
\end{multline*}
\begin{align*}
\bm\Upsilon_i\bm\theta^i:=\bm\sigma^{-1}d^\tau\bm\sigma,
& &
Q(\bm\sigma^{-1}\mu^i)&:=\nabla^{\tau,\omega}(\bm\sigma^{-1}\mu_i)+\bm\sigma^{-2}\Big(\mu_i\mu_j-\frac{1}{2}|\mu|^2\eta_{ij}\Big)\bm\theta^j.
\end{align*}
From this perspective, what is going on is quite transparent: The components $\bm\xi^i$ and $\tau$ of the tractor connection, the components $\bm z^i$ and $\bm e^i$ of the generalized conformal frame as well as the component $\kappa$ of the tractor section are Lagrange multipliers altogether imposing the constraints
\begin{align}\label{Tractor functionals: constraints1}
\bm e^i=\bm\theta^i,
& &
\bm\sigma^{-1}\mu^i=\bm z^i=\bm\Upsilon^i,
& &
\kappa=-\tilde\Lambda,
\end{align}
\begin{align}\label{Tractor functionals: constraints2}
\bm\xi_{ij}=-\frac{1}{(n-2)}\Big(\bm F_\omega{}^{k}{}_{ikj}-\tfrac{1}{2(n-1)}\bm F_\omega{}^{kl}{}_{kl} \eta_{ij}\Big).
\end{align}
Note that the first three constraints \eqref{Tractor functionals: constraints1} are invariantly written as 
\begin{align*}
	\bm E^I=\bm\sigma D(\bm\sigma^{-1}\bm X^I),
	& & 
	I_I\bm E^I=0
	& & 
	|I|^2=-\tilde\Lambda.
\end{align*}
and can be solved for components of the tractor section $I^I$ and the generalized conformal frame $\bm E^I$.

The variations with respect to $\omega^i{}_j$ and $\bm\theta^i$ then lead to
\begin{align*}
d^{\tau,\omega}\bm\theta^i=0,
& & 
Q(\bm\Upsilon_i)-\bm\xi_i-\tfrac{\tilde\Lambda}{2}\bm\sigma^{-2}\bm\theta_i=0,
\end{align*}
which, together with the remaining constraint \eqref{Tractor functionals: constraints2}, correspond to equations for the normality of the connection $A^I{}_J$  (see equations \eqref{normality_conditions}) and equations for $I^I$ to be a covariantly constant tractor (see equations \eqref{Tractor Einstein equations 2}).

The variation with respect to $\mu^i$ and $\bm\sigma$ do not produce new equations of motion due to gauge symmetry.

\subsection{Gauge equivalent functionals}
\mbox{}

We now consider other tractor functionals, gauge equivalent to \eqref{Tractor functionals: tractor_functional1}, which will bring us back to the Cartan-Palatini-Weyl functional \eqref{Introduction Cartan-Palatini-Weyl} described in the introduction. Here, by \emph{gauge equivalent} we mean that the corresponding spaces of critical points coincide modulo gauge transformations and that the functional coincides when evaluated on critical points.

Consider the functional
\begin{align}\label{Tractor functionals: tractor_functional2}
S_2[A,I,\bm E]=\int_M\bm L_2{}^{IJKL}\wedge (\star \bm E^{n-2})_{IJKL},
\end{align}
with
\begin{align*}
\bm L_2{}^{IJKL}=\bm\sigma^{1-n}\bm X^II^J\Big[\tfrac{1}{2}F_A{}^{KL}+\alpha\bm\sigma^{-2}\bm E^K\wedge\bm E^L + \bm\sigma^{-1} D\Big(I^K-|I|^2\bm\sigma^{-1}\bm X^K\Big)\wedge\bm E^L\Big].
\end{align*}
This is obtained from \eqref{Tractor functionals: tractor_functional1} by implementing the constraint $\bm E_I I^I=0$. Note that this constraint implies
\begin{align*}
\bm Y^I=\bm\sigma^{-1}\Big(I^I-\tfrac{1}{2}|I|^2\bm\sigma^{-1}\bm X^I\Big),
\end{align*}
and that performing the substitution in \eqref{Tractor functionals: tractor_functional1} leads immediately to \eqref{Tractor functionals: tractor_functional2}.

It must be emphasized that the implementation of constraints does not lead in general to gauge equivalent functionals. Rather, the process usually leads to a smaller number of equations (relative to the number of fields) for the critical points, thus effectively enlarging the space of solutions. And, in fact, the condition $\bm X_I DI^I=0$ does not appear as an equation for the critical points of \eqref{Tractor functionals: tractor_functional2}. On the other hand, note that we can think of \eqref{Tractor functionals: tractor_functional2} as a functional for the independent fields $A^I{}_J$, $I^I$ and $\bm E^I$ i.e. \emph{without} requiring the implicit constraint $\bm E_I I^I=0$. From this perspective, it would naively seems that critical points of \eqref{Tractor functionals: tractor_functional2} are missing both constraints $\bm X_I DI^I=0$ and $\bm E_I I^I =0$ as compared to the critical points of \eqref{Tractor functionals: tractor_functional1}.

We claim, however, that the critical points of \eqref{Tractor functionals: tractor_functional1} and \eqref{Tractor functionals: tractor_functional2} are indeed equivalent modulo gauge transformations, once appropriately understood. In fact, as compared to \eqref{Tractor functionals: tractor_functional1}, the functional \eqref{Tractor functionals: tractor_functional2} \emph{enjoys extra gauge invariance} compensating for the missing field equations: Besides the manifest invariance under $P$-gauge transformations, \eqref{Tractor functionals: tractor_functional2} is also invariant under affine transformations of the form
\begin{align}\label{Tractor functionals: tractor_functional2, gauge invariance1}
A^I{}_J\mapsto A^I{}_J+ \bm a (\bm X^I I_J- I^I\bm X_J),
& &
\bm E^I \mapsto \bm E^I +  b \bm X^I
\end{align}
with $\bm a\in\Gamma(\cL^{-1} \otimes T^*M)$ and $b \in\Gamma(T^*M)$. In particular, the missing equations $\bm X_I DI^I=0$ and $E_I I^I =0$ can be achieved as gauge fixing conditions for \eqref{Tractor functionals: tractor_functional2, gauge invariance1}. The space of critical points of \eqref{Tractor functionals: tractor_functional1}, up to $P$-gauge transformations, thus agrees with the space of critical points of \eqref{Tractor functionals: tractor_functional2}, up to $P$-gauge and affine transformations.

The origin of the additional invariance also becomes clearer in terms of tractor components \eqref{Tractor functionals: tractor_components}. The affine transformations \eqref{Tractor functionals: tractor_functional2, gauge invariance1} then read
\begin{align}\label{Tractor functionals: tractor_functional2, gauge invariance1 decomposition}
\bm\xi_k\mapsto \bm\xi_k + \bm a \; \mu_k,
& &
\tau\mapsto \tau- \bm a\bm\sigma ,
& &
\bm z_i \bm e^i &\mapsto \bm z_i \bm e^i -  b,
\end{align}
 while the functional \eqref{Tractor functionals: tractor_functional2} can be rewritten as
\begin{align}\label{Tractor functionals: tractor_functional2 decomposition}
S_2=\int_M \bm\sigma^{2-n} L_2{}^{ij}\wedge (\star \bm e^{n-2})_{ij},
\end{align}
with
\begin{align*}
L_2{}^{ij}
=\tfrac{1}{2}F_\omega{}^{ij}
+Q(\bm\sigma^{-1}\mu^i)\wedge\bm e^j+\bm\sigma^{-2}\Big(\alpha\bm e^i
-\tfrac{1}{2}\kappa\bm\theta^i\Big)\wedge\bm e^j
+\bm\sigma^{-1}\mu^id^{\tau,\omega}\bm\theta^j-\bm\xi^i\wedge(\bm e^j-\bm\theta^j).
\end{align*}
The equivalence between \eqref{Tractor functionals: tractor_functional1 decomposition} and \eqref{Tractor functionals: tractor_functional2 decomposition}is then seen directly as follows: Varying $\tau$ and $\bm z_i$ in \eqref{Tractor functionals: tractor_functional1 decomposition} we obtain a pair of algebraic constraints that can respectively be solved for $\bm z_i$ and $\tau$. The functional \eqref{Tractor functionals: tractor_functional1 decomposition} and \eqref{Tractor functionals: tractor_functional2 decomposition} are identical when evaluated on these constraints. The interpretation is however different in each case: for the first functional the constraints are implemented by the Lagrange multipliers, whereas for the second functional they are considered to be gauge fixing conditions for the extra invariance under \eqref{Tractor functionals: tractor_functional2, gauge invariance1 decomposition}.

A similar procedure can be applied once more to further implement the constraint
\begin{align*}
\bm E_I(\bm E^I-\bm\sigma D(\bm\sigma^{-1}\bm X^I))=0.
\end{align*} 
This is solved as $\bm E^I = D^{\tau-z} \bm X^I$, with $\nabla^{\tau-z}$ an arbitrary Weyl connection on $\cL$, and leads to another gauge equivalent functional:
\begin{align}\label{Tractor functionals: tractor_functional3}
S_3[A,I]=\int_M\bm L_3{}^{IJKL}\wedge (\star \bm E^{n-2})_{IJKL},
\end{align}
with
\begin{align*}
\bm L_3{}^{IJKL}=\bm\sigma^{1-n}\bm X^II^J\Big[\tfrac{1}{2}F_A{}^{KL}+\bm\sigma^{-1}DI^K\wedge\bm E^L
+\beta\bm\sigma^{-2}\bm E^K\wedge\bm E^L\Big],
\end{align*}
$\bm E^I= D^{\tau-z} \bm X^I$ and $\beta=(|I|^2+\tilde\Lambda)^2-\tfrac{1}{2}|I|^2-\tfrac{n-2}{2n}\tilde\Lambda$. 

The critical points of \eqref{Tractor functionals: tractor_functional3} are even more under-determined than were the critical points of \eqref{Tractor functionals: tractor_functional2}: as compared to those of \eqref{Tractor functionals: tractor_functional1}, they are neither required to satisfy $\bm E_I I^I=0$, $\bm X_IDI^I=0$ nor $\Ric(F_A)=0$. On the other hand, \eqref{Tractor functionals: tractor_functional3} also enjoys greater invariance than both \eqref{Tractor functionals: tractor_functional1} and \eqref{Tractor functionals: tractor_functional2}: Besides invariance under $P$-gauge transformations, it possess a larger invariance under affine transformations
\begin{align*}
A^I{}_J\mapsto A^I{}_J+\bm a(\bm X^I I_J-I^I\bm X_J)+\bm c^k(\bm X^IZ_J{}_k-Z^I{}_k\bm X_J)
& &
\bm E^I\mapsto \bm E^I+b\bm X^I
\end{align*}
for $\bm a\in\Gamma(T^*M\otimes\cL)$, $b\in\Gamma(T^*M)$ and $\bm c^k\in\Gamma(T^*M \otimes\cL^{-1}\otimes \cE)$. Once again, the missing equations can be achieved as gauge fixing conditions, thus proving the gauge equivalence of the functionals.

In terms of a tractor decomposition, this larger group of affine transformations above is generated by
\begin{align*}
\bm\xi_k\mapsto \bm\xi_k+\bm a \bm\sigma\bm\Upsilon_k +\bm c_k,
& & 
\tau\mapsto \tau-\bm a\bm\sigma,
& &
\bm z_i \bm e^i &\mapsto \bm z_i \bm e^i -  b,
\end{align*} 
and indeed the functional \eqref{Tractor functionals: tractor_functional3} can be rewritten as
\begin{align}\label{Tractor functionals: tractor_functional3 decomposition}
S_3=\int_M \bm\sigma^{2-n} L_3{}^{ij}\wedge (\star \bm\theta^{n-2})_{ij},
\end{align}
\begin{align*}
L_3{}^{ij}
=\tfrac{1}{2}F_\omega{}^{ij}+\bm\sigma^{-1}\mu^id^{\tau,\omega}\bm\theta^j
+Q(\bm\sigma^{-1}\mu^i)\wedge\bm\theta^j+\bm\sigma^{-2}\gamma\bm\theta^i\wedge\bm\theta^j,
\end{align*}
with $\gamma=(\kappa+\tilde\Lambda)^2 -\tfrac{n-2}{2n}\tilde\Lambda$. Note that this version of the functional is explicitly independent of the components $\bm\xi_i$ and $\tau$ of the tractor connection. The equivalence between the functionals \eqref{Tractor functionals: tractor_functional2 decomposition} and \eqref{Tractor functionals: tractor_functional3 decomposition} is obtained directly by varying $\bm\theta^i$ and $\bm\xi^i$ in \eqref{Tractor functionals: tractor_functional2 decomposition} and proceeds from the same type of constraint/gauge trade-off that was encountered many times in the discussions above.

We can now finally make contact with the Cartan-Palatini-Weyl functional \eqref{Introduction Cartan-Palatini-Weyl} described in the introduction. This is reached from \eqref{Tractor functionals: tractor_functional3 decomposition} by implementing the constraint $\kappa = |I|^2=-\tilde\Lambda$ (obtained by varying $\kappa$) and a partial gauge fixing condition $\bm\sigma^{-1}\mu^i=\bm\Upsilon^i$, thus reducing the $P$-gauge invariance to $W$-gauge invariance. Note that this last constraint is implemented by and can be solved for the \emph{same} field $\kappa$: its implementation thus automatically yields a gauge equivalent functional. This differs fundamentally from the previous discussion where the constraints could not be directly solved in terms of the Lagrange multipliers imposing it.

\begin{theorem}
The tractor functionals \eqref{Tractor functionals: tractor_functional1}, \eqref{Tractor functionals: tractor_functional2}, \eqref{Tractor functionals: tractor_functional3} and the Cartan-Palatini-Weyl functional \eqref{Introduction Cartan-Palatini-Weyl} define the same moduli space of critical points. What is more, the value of all these functionals coincide when evaluated on these critical points.
\end{theorem}

\subsection*{Acknowledgments}
This work was supported in part by the National Research Foundation of Korea (NRF) grant funded by the Korea government (MSIT) (2019R1F1A1060827).

CS was partially supported by a KIAS Individual Grant (SP036102) via the Center for Mathematical Challenges at Korea Institute for Advanced Study.

YH was supported by a ``Chargé de recherche'' grant from FNRS.

\bibliographystyle{ieeetr}

\end{document}